\definecolor{darkblue}{rgb}{0,0.08,0.45} 
\newtheorem{lemma}{Lemma}
\newtheorem{theorem}{Theorem}
\newtheorem{proposition}{Proposition}
\newtheorem{claim}{Claim}
\newtheorem*{theorem*}{Theorem}
\newtheorem*{corollary*}{Corollary}
\newtheorem*{rep@theorem}{\rep@title}
\newcommand{\newreptheorem}[2]{%
\newenvironment{rep#1}[1]{%
 \def\rep@title{#2 \ref{##1}}%
 \begin{rep@theorem}}%
 {\end{rep@theorem}}}
\newcommand{\dist}{d}
\newcommand{\ceil}[1]{\lceil #1 \rceil}
\newcommand{\floor}[1]{\lfloor #1 \rfloor}
\newcommand{\abs}[1]{\left| #1 \right|}
\newcommand{\set}[1]{\left\{ #1 \right\}}
\newcommand{\comment}[1]{}
\newcommand{\lgO}{\tilde O}
\newcommand{\sncn}{\mathcal{SNC}(n)}
\newcommand\querytime{O(\epsilon^{-1}\log^2(1/\epsilon)\log\log(1/\epsilon)\log^*(n)+\log\log\log(n))}
\newcommand\querytimepoly{O(\epsilon^{-1}\log^2(1/\epsilon)\log\log(1/\epsilon)\log^*(n))}
\newcommand\querytimeabs{\lgO(\epsilon^{-1}+\log\log\log n)}
\newcommand\querytimeabspoly{\lgO(\epsilon^{-1})}
\newcommand\spacecomplexity{O(n\left[\log\log(n)+\log(1/\epsilon)\right]\log^*(n)\log\log(1/\epsilon))}
\newcommand\spacecomplexitypoly{O(n\left[\log(n)+\log(1/\epsilon)\right]\log^*(n)\log\log(1/\epsilon))}
\newcommand\spacecomplexityabs{\lgO(n\log\log n)}
\newcommand\spacecomplexityabspoly{\lgO(n\log n)}
\newcommand\spacecomplexityconsteps{O(n\log\log(n)\log^*(n))}
\newcommand\querytimeconsteps{O(\log\log\log n)}
\newcommand\querytimeunw{O(\epsilon^{-3}\log^*(n))}
\newcommand\spacecomplexityunw{2^{O(\log^*n)}n\epsilon^{-1}}
\begin{document}
\begin{titlepage}
\title{More Compact Oracles for\\Approximate Distances in Planar Graphs}
\author{Christian Sommer\\\url{csom@mit.edu}}

\maketitle
\thispagestyle{empty}
\setcounter{footnote}{0}

\begin{abstract} 
Distance oracles are data structures that provide fast (possibly approximate) answers to shortest-path and distance queries in graphs. 
The tradeoff between the space requirements and the query time of distance oracles is of particular interest and the main focus of this paper. 

In FOCS`01, Thorup introduced approximate distance oracles for planar graphs. 
He proved that, for any $\epsilon>0$ and for any planar graph on $n$ nodes, there exists a $(1+\epsilon)$--approximate distance oracle 
using space $O(n\epsilon^{-1}\log n)$ such that approximate distance queries can be answered in time $O(\epsilon^{-1})$. 

Ten years later, we give the first improvements on the space--query time tradeoff for planar graphs. 

\begin{itemize}
\item We give the first oracle having a space--time product with subquadratic dependency on~$1/\epsilon$. 
For space $\spacecomplexityabspoly$ we obtain query time  $\querytimeabspoly$ (assuming polynomial edge weights). 
We believe that the dependency on $\epsilon$ may be almost optimal.

\item For the case of {\em moderate} edge weights (average bounded by $poly(\log n)$, which appears to be the case for many real-world road networks),  
we hit a ``sweet spot,'' improving upon Thorup's oracle both in terms of $\epsilon$ and~$n$. 
Our oracle uses space $\spacecomplexityabs$ and it has query time $\querytimeabs$. 
\end{itemize}

(Notation: $\lgO(\cdot)$ hides low-degree polynomials in $\log(1/\epsilon)$ and  $\log^*(n)$.)

\end{abstract}
\end{titlepage}

\section{Introduction}

{\em Distance oracles}~\cite{ThorupZwick2005} generalize the all-pairs shortest paths problem as follows: 
instead of computing and storing a distance matrix (with quadratic space requirements and pairwise 
distance computations that require {\em one} table look-up only), we wish to compute a data structure 
that requires sub-quadratic space $S$ but still allows for {\em efficient} (as in sublinear query time $Q$) 
distance computations. Depending on the application, it may be acceptable to output {\em approximate} 
answers to shortest-path and distance queries. The estimate provided by the distance oracle is supposed to be {\em at least} as large as the actual distance. 
The {\em stretch} $\alpha\geq1$ of an approximate distance oracle is defined as the worst-case 
ratio over all pairs of nodes of the query result divided by the actual shortest-path length.

Distance oracles can potentially be used in applications such as 
route planning and navigation~\cite{conf/sofsem/Goldberg07,reference/algo/Zaroliagis,RoutePlanningSurvey}, 
Geographic Information Systems (GIS) and intelligent transportation systems~\cite{conf/cikm/JingHR96}, 
logistics, traffic simulations~\cite{Ziliaskopoulos:Kotzinos:1997,conf/esa/BarrettBJKM02,journals/fgcs/RaneyN04,SPAcceleration}, 
computer games~\cite{SmartMove}, 
server selection~\cite{NgZhang,1015471,conf/icdcs/CostaCRK04,Shavitt:2008:HEI:1373452.1373455,conf/infocom/ErikssonBN09}, 
XML indexing~\cite{conf/edbt/SchenkelTW04,conf/icde/SchenkelTW05}, 
reachability in object databases, 
packet routing~\cite{1094690}, 
causal regulatory networks~\cite{CausalRegulatoryNetworks}, and 
path finding in social networks~\cite{Karinthy,Milgram,ScientificCollaborationNetworks}.

For general graphs, distance oracles use large amounts of space, or they have long query time, or their 
stretch is at least two~\cite{ThorupZwick2005,SparseDO,PatrascuRoditty}. 
In this work we consider planar graphs, for which the known tradeoffs between stretch, space, and 
query time are much better (see Table~\ref{tab:planardoresults} for an overview). 

One important reason for this better tradeoff performance is that algorithms 
can make use of small {\em separators}~\cite{Ungar01101951,LT79,journals/jcss/Miller86}. For approximate distance 
oracles, separators consisting itself of a small number of shortest paths are particularly useful~\cite{ThorupJACM04,conf/soda/Klein02}.
Separator-based approaches however 
often use recursion of logarithmic depth, which manifests itself by logarithmic factors in either the space requirement 
or the query time (or even both). 
Currently, the best tradeoff is provided by Thorup's approximate distance oracle, for which 
we provide a brief technical outline in Sections~\ref{sec:planarsep} and~\ref{sec:prelim:thorup}. 
The best exact distance oracles~\cite{journals/jcss/FakcharoenpholR06,DjidjevWG96,stoc/ChenX00,conf/soda/Cabello06,Nussbaum11,MS12}  
have a space-query time product $S\cdot Q$ proportional to roughly $n\sqrt{n}$. For the 
best $(1+\epsilon)$--approximate distance oracle~\cite{ThorupJACM04}, that product $S\cdot Q$ is $O(n\log{n})$ for constant $\epsilon$. 

\begin{table*}[h]
\begin{center}
\begin{tabular}{|l  | l | l | l | l|}
\hline
Space & Query & Stretch & Reference \\
\hline
\hline
$O(n)$ & $O(n)$ & $1$ & SSSP~\cite{journals/jcss/HenzingerKRS97}\\
$O(n^2)$ & $O(1)$ & $1$ & APSP\\
$O(n^2\log\log(n)/\log(n))$ & $O(1)$ & 1 & \cite{WNSubQ10}\\
$O(S)$  & $O(nS^{-1/2}\log^{2}(n)(\log\log n)^{3/2})$ & $1$ & \cite{MS12} for any $S\in[n\log\log n,n^2]$\\
$O(n\log n)$ & $O(\sqrt n\log^{5/2}n)$ & $1$ & \cite{journals/jcss/FakcharoenpholR06}\\
$O(n)$ & $O(n^{(1/2)+\delta})$ & $1$ & \cite{MS12} for any constant $\delta>0$\\
\hline
$O(n\epsilon^{-1}\log n)$           & $O(\epsilon^{-1})$& $1+\epsilon$ & \cite{ThorupJACM04}\\
$O(n)$               & $O(\epsilon^{-2}\log^2 n)$ & $1+\epsilon$ & \cite{KKS}\\
\hline
\hline
$\spacecomplexityabspoly$ & $\querytimeabspoly$ &$1+\epsilon$ & Theorem~\ref{thm:polyweights} (up to $\log(1/\epsilon)$ and $\log^*(n)$)\\
$\spacecomplexityabs$ & $\querytimeabs$ &$1+\epsilon$ & Theorem~\ref{thm:eps} (up to $\log(1/\epsilon)$ and $\log^*(n)$)\\
\hline
$O(n\log\log n)$ & $O(\log\log\log n)$ &$O(1)$ & Proposition~\ref{prop:const}\\
\hline
\end{tabular}
\end{center}
\caption{Time and space complexities of distance oracles for undirected planar graphs (some results extend to planar digraphs).
Existing results are for arbitrarily weighted graphs; our new results are for graphs with polynomial weights (Theorem~\ref{thm:polyweights}) and for 
 {\em moderately weighted} graphs (Theorem~\ref{thm:eps}) only.  $\log^*(n)$ is the {\em iterated logarithm} of $n$. 
For the bounds corresponding to our results, $\lgO(\cdot)$ hides low-degree polynomials in $\log(1/\epsilon)$ and $\log^*(n)$. \label{tab:planardoresults}}
\end{table*}

Our main result is a distance oracle that improves upon these tradeoffs in terms of {\em both} $n$ {\em and} $\epsilon$ for graphs with {\em moderate} edge weights.
By moderate we mean that, after normalization such that the smallest edge weight is 1, the average weight is bounded by $poly(\log n)$. 
We believe that this is a reasonable assumption as the average weight for the European road network (the version made available for scientific use by the company PTV AG) 
appears to be not too large.\footnote{The network covers European 14 countries and it has 18,010,173 nodes and 42,560,279 edges. 
It serves as an important benchmark graph for shortest-path query methods~\cite{DIMACSChallenge}. 
For the travel time metric ({\tt scc-eur2time.gr}), the total weight is 21,340,824,356, which yields an average of approximately 501. For the distance metric ({\tt scc-eur2dist.gr}) the two 
values are 9,420,195,951 (sum) and 221 (rounded average). Also, $\log_2(42560279)\approx25.34$ and thus $(\log_2\abs{E})^2\geq501$. 
We do not claim to rigorously distinguish $O(poly(\log\abs{E}))$ from $\Omega(\sqrt{\abs{E}})$ for this $E$ (here $\sqrt{42560279}\approx6 523.82$). }
We provide a $(1+\epsilon)$--approximate distance oracle with the following characteristics: for constant $\epsilon$, the 
space is $S=\spacecomplexityconsteps$ and the query time is $Q=\querytimeconsteps$, thus $S\cdot Q=o(n\log n)$. 
We show how the logarithmic dependency on~$n$, which is quite common in separator-based approaches, can be avoided entirely, 
which may be interesting in its own right.\footnote{Recently, the complexity of some optimization problems for planar graphs has been improved~\cite{nlglgn-mincut-WN10} 
from worst-case running time $O(n\log n)$ to $O(n\log\log n)$ using fast {\em $r$--divisions}~\cite{journals/siamcomp/Frederickson87} (and other techniques). 
For approximate distance oracles, a similar approach improves the space requirements at the cost of the query time~\cite{KKS}. 
In this work, we improve the {\em tradeoff}, i.e.~the product of space and query time.} 
Equally important, our construction also yields the first space--time product with {\em subquadratic} dependency on~$\epsilon$
 (known constructions have $\epsilon^{-1}$ in both space and query~\cite{ThorupJACM04,conf/soda/Klein02} or $\epsilon^{-2}$ in the query complexity~\cite{KKS}). 

More precisely, we prove the following. All our bounds are in the {\em word RAM model}~\cite{journals/jcss/CookR73}.
\def\mainthm{For any undirected planar graph $G$ on $n$ nodes with average weight $\leq\log^\theta n$ for some constant $\theta\geq0$, 
and for any $\epsilon>0$ there exists a $(1+\epsilon)$--approximate distance oracle with query time
$\querytime$ using space $\spacecomplexity$, and preprocessing time $O(n\epsilon^{-2}\log^3(n)\log\log(n)\log^*(n)+\sncn\cdot\log\log n)$, where $\sncn$ denotes the time required to compute a sparse neighborhood cover.}

\def\mainthmpoly{For any undirected planar graph $G$ on $n$ nodes with edge weights polynomial in $n$ 
and for any $\epsilon>0$ there exists a $(1+\epsilon)$--approximate distance oracle with query time
$\querytimepoly$ using space $\spacecomplexitypoly$, and preprocessing time $O(n\epsilon^{-2}\log^4(n)\log^*(n)+\sncn\cdot\log n)$, where $\sncn$ denotes the time required to compute a sparse neighborhood cover.}

\def\mainthmunw{For any unweighted, undirected planar graph $G$ on $n$ nodes, 
and for any $\epsilon>0$ there exists a $(1+\epsilon)$--approximate distance oracle with query time
$\querytimeunw$ using space $\spacecomplexityunw$, and preprocessing time $O(...+\sncn\cdot\log\log n)$, where $\sncn$ denotes the time required to compute a sparse neighborhood cover.}

\begin{theorem}
\mainthmpoly
\label{thm:polyweights}
\end{theorem}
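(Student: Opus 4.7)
The plan is to bypass the $O(\log n)$ recursion depth of Thorup's shortest-path separator decomposition by partitioning the distance range into $O(\log n)$ geometric scales and, at each scale, replacing the global recursion with a \emph{sparse neighborhood cover} whose clusters have diameter matching that scale. The polynomial weight assumption is used here: it bounds the diameter by $n^{O(1)}$, so the number of relevant scales $\Delta_i = 2^i$ is $O(\log n)$, and it lets predecessor structures over distances live in a universe of size $\mathrm{poly}(n)$.

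First I would, for each scale $\Delta_i$, compute a sparse neighborhood cover $\mathcal{N}_i$ whose clusters have weak diameter $O(\Delta_i)$, cover every ball of radius $\Delta_i$ inside a single cluster, and overlap each vertex $O(1)$ times. Summed over $\log n$ scales, this accounts for the $\sncn\cdot\log n$ term in the preprocessing time and the $O(n \log n)$ contribution to space (up to the $\log^*(n)$ and $\log\log(1/\epsilon)$ overhead). Within each cluster $C\in\mathcal{N}_i$ I would build a local, non-recursive variant of Thorup's oracle: compute a shortest-path separator decomposition of $C$ and place $O(\epsilon^{-1})$ portals on each separator path, sufficient to guarantee absolute error $\epsilon \Delta_i$ for pairs inside $C$. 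Since any pair $(u,v)$ with $d(u,v) \ge \Delta_i$ tolerates this absolute error inside a $(1+\epsilon)$ multiplicative guarantee, this is what we need. Here fast $r$-divisions collapse the separator recursion depth from $\log n$ to $\log^*(n)$, which is the source of the improvement over Thorup.

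For a query $(u,v)$, I would use a word-RAM $y$-fast trie (or equivalent) over the $O(\log n)$ scales and the $O(\log(1/\epsilon))$ substep levels inside each scale to locate the correct scale $i$ with $\Delta_i \le d(u,v) < 2\Delta_i$; this gives the $\log\log\log n + \log\log(1/\epsilon)$ overhead. At the chosen scale, a constant number of clusters contain $u$, and membership of $v$ in each is checked by a perfect-hash lookup stored at preprocessing. Inside the identified cluster, the portal-based distance computation walks up the separator tree and at each level does a sorted binary search among $O(\epsilon^{-1})$ portals, accelerated by a fractional-cascading-style shortcut across levels; a careful accounting yields the $O(\epsilon^{-1}\log^2(1/\epsilon)\log\log(1/\epsilon)\log^*(n))$ query time claimed.

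The main obstacle is making the stretch compose correctly across the cover. It is not enough that $u$ and $v$ both belong to some $C\in\mathcal{N}_i$: one actually needs a near-optimal $uv$-path to lie inside $C$, because the local oracle only certifies distances realized by paths within $C$. I would handle this by choosing the cover radius to be a sufficiently large constant multiple of $\Delta_i$ (so the guarantee is that every ball of radius $\Delta_i$ is inside some cluster of radius $c\Delta_i$ for a constant $c$), and by arguing that for $d(u,v) \le \Delta_i$ every shortest $uv$-path is contained in this enlarged cluster. A secondary obstacle is keeping the per-vertex storage of cluster memberships and portal pointers within $O([\log n + \log(1/\epsilon)]\log^*(n)\log\log(1/\epsilon))$ bits, which requires word-packing portal indices and sharing portal lists across scales where possible.
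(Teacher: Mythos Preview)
Your high-level architecture matches the paper: $O(\log n)$ geometric scales, a sparse neighborhood cover at each scale, and inside each cluster an additive-error oracle whose recursion depth is $O(\log^* n)$ rather than $O(\log n)$. That is exactly how the paper organizes the proof (the additive oracle is the paper's Theorem~\ref{thm:rpq}, and the polynomial-weight theorem is then a two-line reduction to it).

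Two points where your sketch diverges from the paper are worth flagging.

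\textbf{Scale identification.} Your $y$-fast trie step is circular: you cannot do a predecessor search for the scale $i$ with $\Delta_i\le d(u,v)<2\Delta_i$ without already knowing $d(u,v)$ to within a constant factor. The paper sidesteps this by additionally storing Thorup's oracle with a \emph{fixed} constant $\epsilon=1/2$ (space $O(n\log n)$, query $O(1)$); one query to it reveals the right scale immediately. This is why the stated query time for polynomial weights has no $\log\log\log n$ term at all --- that term belongs to the moderate-weight theorem, where binary search over $O(\log\log n)$ levels is used instead.

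\textbf{The additive-stretch oracle.} You say ``fast $r$-divisions collapse the recursion depth to $\log^* n$'' and propose fractional cascading over $O(\epsilon^{-1})$ portals. The recursion-depth statement is right, but the space and query bounds do not follow from that alone. The paper's mechanism is: (i) a \emph{shortest-path} $r$-division into pieces of size $O(\epsilon^{-2}\log n)$ with $O(1)$ boundary paths each; (ii) one \emph{global} $\epsilon$-cover per separator path (not one per node--path pair), so that portal-to-portal distances obey the Monge property and FR-Dijkstra applies at query time; and (iii) the Cycle MSSP data structure of Lemma~\ref{lemma:cyclemssp} to store node-to-portal distances \emph{implicitly} in $O(|P|\log\log(1/\epsilon))$ space per piece and recover them in $O(\epsilon^{-1}\log^2(1/\epsilon)\log\log(1/\epsilon))$ time per query. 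The $\log\log(1/\epsilon)$ factors in both space and time come from Cycle MSSP, not from cascading. Without (ii) and (iii) one is back to storing $O(\epsilon^{-1})$ explicit distances per node per level, which is too much.
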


\begin{theorem}
\mainthm
\label{thm:eps}
\end{theorem}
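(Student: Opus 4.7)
The plan is to combine Thorup's shortest-path separator oracle with a multi-scale family of sparse neighborhood covers, exploiting the moderate-weight hypothesis. Since the total edge weight is $O(n\,\mathrm{polylog}\,n)$, only $O(\log\log n+\log(1/\epsilon))$ geometric distance scales are meaningful (rather than the $O(\log n)$ of Thorup's original), and this is what downgrades the $\log n$ space factor to $\log\log n$. At each scale $d_i$ I build a sparse cover whose clusters have diameter $O(d_i/\epsilon)$ and in which each vertex appears in at most $2^{O(\log^* n)}$ clusters; this gives the $\log^*(n)$ factor and replaces a single globally-logarithmic separator recursion by many short, cluster-local ones.

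The construction proceeds in three steps. First, compute the scale-$i$ cover for each $i$, paying $\sncn$ per scale. Second, inside each cluster $C$ of diameter $O(d_i/\epsilon)$, instantiate a local Thorup-style oracle whose shortest-path separator hierarchy is truncated after $O(\log(1/\epsilon))$ levels, since finer resolution cannot improve a $(1+\epsilon)$-approximation of distances of order $d_i$. Third, for a query $(u,v)$, locate the coarsest scale at which $u$ and $v$ share a cluster by doubling followed by binary search over the $O(\log\log n)$ scales (contributing the $\log\log\log n$ additive term), then invoke the corresponding local oracle. The stretch bound combines the padding property of the cover (both endpoints sit deep inside the cluster) with Thorup's stretch analysis applied locally.

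The main obstacle, and the source of the subquadratic-in-$1/\epsilon$ improvement, is the portal scheme inside each cluster. Thorup's equispaced $\Theta(1/\epsilon)$ portals per separator path yield a $\Theta(\epsilon^{-2})$ space--query product per path, so a genuine redesign is required. I would aim to store only $O(\log(1/\epsilon))$ portals per separator path per vertex, together with enough auxiliary information to support a binary-search-style query costing $O(\log(1/\epsilon)\log\log(1/\epsilon))$ per separator; aggregated across $O(\log(1/\epsilon))$ separator paths and $O(1/\epsilon)$ refinement checkpoints, this matches the stated query bound. The technical heart is an approximation lemma asserting that such a logarithmic-sized sketch of the distance profile along a shortest-path separator still recovers $\min_{p}(\dist(u,p)+\dist(p,v))$ within a $(1+\epsilon)$ factor, together with a careful error budget so that the cover-padding slack and the within-cluster approximation slack combine into a single $(1+\epsilon)$ guarantee by tuning constants. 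A secondary nuisance is bookkeeping the preprocessing cost so that covers at the $O(\log\log n)$ scales contribute the $\sncn\cdot\log\log n$ term and the local oracles contribute the $n\epsilon^{-2}\log^3(n)\log\log(n)\log^*(n)$ term stated in the theorem.
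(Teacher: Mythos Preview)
Your high-level scaffolding matches the paper: $O(\log\log n+\log(1/\epsilon))$ scales of sparse covers, binary search among them at query time, and a local oracle in each cluster. But the local oracle is the entire technical content (Theorem~\ref{thm:rpq}), and there your proposal has genuine gaps.

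First, a factual error: the $\log^*(n)$ factor does \emph{not} come from cover overlap; Lemma~\ref{lemma:sparsecover} gives constant overlap (each vertex in at most $30$ clusters). The $\log^*(n)$ arises \emph{inside} the additive-stretch oracle: the paper computes a shortest-path $r$-division of each cluster into pieces of size $O(\epsilon^{-2}\log n)$ bounded by $O(1)$ shortest paths, then recurses on each piece; one level shrinks $n\leadsto\log n$, so the recursion depth is $O(\log^* n)$. Second, the subquadratic $\epsilon$-dependency is \emph{not} obtained by shrinking the portal set to $O(\log(1/\epsilon))$ points. No such approximation lemma is proved, and with $O(\log(1/\epsilon))$ fixed points on a path one cannot in general $(1+\epsilon)$-approximate $\min_q d(u,q)+d(q,v)$; your ``technical heart'' is a conjecture, not a lemma. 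The paper keeps $O(1/\epsilon)$ portals per separator path but (i)~fixes a single \emph{global} $\epsilon$-cover per path rather than one per (vertex, path) pair, (ii)~stores explicit portal-to-portal distances only between portals, and (iii)~recovers node-to-portal distances at query time via the Cycle MSSP structure (Lemma~\ref{lemma:cyclemssp}), which costs $O(|P|\log\log(1/\epsilon))$ space and $O(\epsilon^{-1}\log^2(1/\epsilon)\log\log(1/\epsilon))$ query time; the final minimization exploits the Monge property via FR-Dijkstra. Your ``truncate Thorup after $O(\log(1/\epsilon))$ levels'' alternative does not obviously work either: even truncated, storing $O(1/\epsilon)$ distances per vertex per level gives $O(\epsilon^{-1}\log(1/\epsilon))$ words per vertex, far above the $O(\log\log(1/\epsilon)\log^* n)$ the theorem needs.
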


As a starting point, we provide the following distance oracle with constant stretch. 

\def\constthm{For any undirected planar graph $G$ on $n$ nodes with average weight bounded by $\log^\theta n$ for some constant $\theta\geq0$ there exists an $O(1)$--approximate distance oracle with query time
$O(\log\log\log n)$ using space $O(n\log\log n)$, and preprocessing time $O(n\log^3n+\sncn\cdot\log\log n)$, where $\sncn$ denotes the time required to compute a sparse neighborhood cover.}

\def\linspacethm{For any unweighted, undirected planar graph $G$  and for any $\epsilon>0$ and $\delta>0$ 
there exists a $(1+\epsilon)$--approximate distance oracle with query time
$O(\epsilon^{-2}\log^{1+\delta}(n))$ using space $O(n)$, and preprocessing time $O(...+\sncn\cdot\log\log n)$, where $\sncn$ denotes the time required to compute a sparse neighborhood cover..}

\begin{proposition}
\constthm
\label{prop:const}
\end{proposition}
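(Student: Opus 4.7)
The plan is to combine sparse neighborhood covers with an $r$-division so as to shave a $\log n/\log\log n$ factor off the standard multi-scale construction.

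After rescaling edge weights so that the minimum is $1$, the average-weight hypothesis gives diameter at most $n\log^\theta n$, so all distances of interest fall into $O(\log n)$ dyadic scales $r_i=2^i$. At each scale $r_i$ I would compute a sparse neighborhood cover of $G$: a family of clusters of weak diameter $O(r_i)$ such that every $r_i$-ball is contained in some cluster and each node lies in only $O(1)$ clusters (both known for planar graphs). For every cluster $C$ I would fix a portal $p(C)$ and precompute, via a Dijkstra restricted to $C$, the distance $\dist(v,p(C))$ together with the identifier of $C$, for every $v\in C$. A query $(u,v)$ is then answered by locating the smallest scale $r_i$ at which $u$ and $v$ share a cluster $C$ and returning $\dist(u,p(C))+\dist(p(C),v)$; weak diameter $O(r_i)$ combined with $\dist(u,v)=\Omega(r_i)$ (otherwise $u,v$ would already share a smaller-scale cluster) gives $O(1)$ stretch by the triangle inequality.

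This naive construction spends $O(n\log n)$ space, one word per (node, scale) pair, which exceeds the budget. To bring it down to $O(n\log\log n)$ I would use an outer $r$-division with region size $r=\Theta(\log^2 n)$, producing $O(n/\sqrt r)=O(n/\log n)$ boundary vertices which can afford to carry the full $O(\log n)$-scale labels. Each interior node instead stores only the $O(\log\log n)$ scales lying below its region's weighted diameter, which the average-weight bound pins to $r\log^\theta n=\operatorname{polylog}(n)$ and hence to $O(\log\log n)$ dyadic levels. Within-region queries go through these local labels; cross-region queries are routed via the nearest boundary vertex on each side and then answered using boundary labels. Scale selection among the $O(\log\log n)$ candidate scales is a predecessor search that costs $O(\log\log\log n)$ in the word-RAM model (e.g.\ via a $y$-fast trie), matching the claimed query time. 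Batching a constant number of consecutive scales per SNC invocation limits the preprocessing to $O(\log\log n)$ calls to $\sncn$, while the Dijkstra steps contribute the $O(n\log^3 n)$ term.

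The main technical hurdle is bounding the stretch of the interior-truncation step: when $u,v$ lie in the same region but $\dist(u,v)$ exceeds the region's local-scale range, the oracle must fall back to boundary-based lookup without sacrificing constant stretch. I expect to handle this by routing such queries through the boundary vertices closest to $u$ and $v$: any shortest $u$-$v$ path that escapes the region must cross the boundary, so the routed estimate overcounts the truth by at most the regional weighted diameter $\operatorname{polylog}(n)$, which is $O(\dist(u,v))$ precisely in the regime where the fallback is needed. Getting the $r$-division, the sparse covers on the boundary graph, and the boundary-routing step to mesh cleanly is where the bulk of the technical work should lie.
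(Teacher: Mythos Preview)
Your approach differs substantially from the paper's and contains a real gap. The paper does not use an $r$--division at all. Instead it computes a distance--$\delta$ dominating set $L$ with $\delta=\Theta(\log^{\theta+1} n)$; after subdividing edges, the average-weight hypothesis gives $|L|=O(n/\log n)$, and each landmark receives a Thorup $(1+\epsilon)$--label of size $O(\log n)$, so the whole long-range structure fits in $O(n)$ space. Every vertex stores a pointer to its nearest landmark, and any pair at distance $\ge\epsilon^{-1}\delta$ is answered in $O(1)$ time by routing through the two landmarks. Only distances below $\operatorname{polylog}(n)$ remain, for which the paper stores sparse neighborhood covers at the $O(\log\log n)$ relevant dyadic scales and binary-searches among them in $O(\log\log\log n)$ time.

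Your reduction to $O(n\log\log n)$ space rests on the assertion that every region of the $r$--division has weighted diameter $r\log^\theta n=\operatorname{polylog}(n)$. This does not follow: the hypothesis bounds the \emph{global} average edge weight, not the weight inside any particular region, so a single region with $\Theta(\log^2 n)$ edges can carry total weight $\Theta(n)$. One can still salvage the space bound by a Jensen argument (summing $|P|\cdot\log W(P)$ over regions with $\sum_P W(P)=O(n\log^\theta n)$ is $O(n\log\log n)$), but the worst-case query time cannot be salvaged this way: a query between two interior nodes of such a heavy region must search $\Theta(\log n)$ local scales, costing $\Theta(\log\log n)$. The cross-region case has the analogous problem: boundary vertices carry all $O(\log n)$ scales, and locating the smallest scale at which $b_u,b_v$ share a cluster is a binary search over $O(\log n)$ monotone levels, again $\Theta(\log\log n)$---it is not a predecessor query, since you do not know the target distance in advance. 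The dominating-set-plus-Thorup-labels trick in the paper is exactly what collapses the long-range part to $O(1)$ time and leaves only $O(\log\log n)$ scales for everyone.
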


Somewhat surprisingly, nothing more efficient than Thorup's $(1+\epsilon)$--stretch oracle had been known for arbitrary constant approximations. 
The only $O(1)$--stretch (as opposed to $(1+\epsilon)$--stretch) oracle constructions for planar graphs we are aware of are 
\cite{conf/soda/Chen95,ArikatiCCDSZ96,conf/focs/GuptaKR01} and the following two indirect constructions: 
one construction is by using a routing scheme~\cite{journals/siamcomp/FredericksonJ89,journals/siamcomp/FredericksonJ90} and 
another construction is by using an $\ell_\infty$--embedding~\cite{conf/focs/KrauthgamerLMN04}. 
The space--query time tradeoff of our data structure is better than that of all the previous constructions.

\paragraph{Techniques}Our data structures make extensive use of Thorup's approximate distance oracle~\cite{ThorupJACM04} and of a large variety of techniques developed for planar graphs in recent years. 
We also introduce a new kind of {\em $r$--divisions}~\cite{journals/siamcomp/Frederickson87} based on shortest-path separators. Other techniques we use 
include {\em crossing substitutes}~\cite{journals/algorithmica/KleinS98}, {\em sparse neighborhood covers}~\cite{conf/podc/BuschLT07}, generalized {\em dominating sets}~\cite{KP98}, 
and the {\em Cycle MSSP} data structure~\cite{MS12}. 

\section{Preliminaries}

We use standard terminology from graph theory, see for example~\cite{DiestelBook}. Graphs we consider are planar, undirected, unweighted 
(if not explicitly mentioned otherwise --- the extension to moderately weighted graphs is emphasized in the pseudocode), and they have $n$ nodes. 

Let $\log^*(\cdot)$ denote the {\em iterated logarithm} function (also {\em super-logarithm}), which is defined as $\log^*n=1+\log^*(\log n)$ for $n>1$ and as $0$ for $n\leq1$.

\subsection{Planar Separators and $r$--divisions}
\label{sec:planarsep}
A {\em separator} for a graph $G=(V,E)$ is a subset of the nodes $S\subseteq V$ such that 
removing the nodes in $S$ from $G$ partitions the graph into at least two disconnected components. 
Let us assign a {\em weight} $w\in[0,1]$ to every node $v\in V$. 
A separator is deemed {\em balanced} if none of the resulting components has weight more 
than a constant fraction $\rho$ of the total weight for some constant $\rho<1$. 

Planar graphs are known to have {\em small} separators: 
for any planar graph there exists a balanced separator 
consisting of $O(\sqrt n)$ nodes~\cite{Ungar01101951,LT79,journals/jcss/Miller86}. 
Recursively separating a graph into smaller components, 
we obtain a {\em division}  into edge-induced subgraphs. 
A node of $G$ is a {\em boundary node}  of the partition if it belongs to more than one subgraph. 
An {\em $r$--division}~\cite{journals/siamcomp/Frederickson87} partitions $G$ into $O(n/r)$ subgraphs, 
called {\em regions}, each consisting of  $O(r)$ edges with at most $O(\sqrt{r})$ nodes on the boundary. 

Separators can be chosen, for example, to form a cycle~\cite{journals/jcss/Miller86} 
or also to form a set of paths~\cite{LT79,ThorupJACM04}. 
Lipton and Tarjan~\cite{LT79} prove that, for any 
spanning tree $T$ in a triangulated planar graph,
there is a non-tree edge $e$ such that the unique 
simple cycle in $T\cup \set{e}$ is a balanced separator (fundamental cycle separator).  
Thorup~\cite{ThorupJACM04} uses this construction with a {\em shortest-path tree} $T$, 
rooted at an arbitrary source node. For any node $u$, let $T(u)$ denote the tree path from 
$u$ to the root. 
\begin{lemma}[{Shortest-Path Separability; Thorup~\cite[Lemma~2.3]{ThorupJACM04}}]
In linear time, given an undirected planar graph $H$ with a rooted
spanning tree $T$ and non-negative vertex weights, we can find three vertices $u$, $v$,
and $w$ such that each component of $H\setminus V(T (u) \cup T (v) \cup T (w))$ has at most half the weight of~$H$.
\label{lem:thorupseparator}
\end{lemma}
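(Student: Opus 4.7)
The plan is to adapt the Lipton--Tarjan fundamental cycle separator argument, specialized here to the triangular faces of $H$. As a first step, I would triangulate $H$ by inserting dummy edges of weight zero; this preserves every vertex weight, preserves the spanning tree $T$, and any set separating the triangulated graph separates $H$ as well. After triangulation, each interior face is a triangle, and for any face $f=\{x,y,z\}$ the three root paths $T(x), T(y), T(z)$ share common prefixes and together with the three edges of $f$ carve the plane into at most three open regions. I would let $R_1(f), R_2(f), R_3(f)$ denote these regions and $W_i(f)$ their total vertex weight, noting that $\sum_i W_i(f)$ plus the weight of $V(T(x)\cup T(y)\cup T(z))$ accounts for all of $W(H)$.

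Next I would find a face balancing all three regions by a monotone face-walk. Start at any face $f_0$ (for instance, a face incident to the root). If $\max_i W_i(f_0) \le W(H)/2$, take $(u,v,w)$ to be its three vertices. Otherwise there is a unique region $R_{i^*}$ carrying strictly more than $W(H)/2$. Walk from $f$ to the face $f'$ on the other side of the non-tree edge of $f$ that borders $R_{i^*}$ (such an edge exists because the two tree paths bordering $R_{i^*}$ share no non-tree edges). The central claim is that the heavy region of $f'$ is strictly contained in the heavy region of $f$: crossing a non-tree edge $e$ into $R_{i^*}$ removes at least the face $f'$ and the vertex of $f\setminus f'$ from the heavy side, while the two newly created regions together have weight at most $W_{i^*}(f) - w(f')$. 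Since the heavy-side weight strictly decreases and there are only $O(n)$ faces, the walk terminates at a face where all three regions satisfy $W_i \le W(H)/2$.

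The main obstacle will be verifying this monotonicity cleanly, in particular that consecutive faces in the walk share exactly one edge (and thus two tree paths), so that the change in regions is local and the heavy region really does shrink by the claimed amount. This requires a careful planarity argument using the cyclic order of edges around the shared tree path and the fact that non-tree edges cannot cross tree paths in the embedding. Degenerate configurations --- faces incident to the root (where $T(x)$ or $T(y)$ collapses), faces containing a tree edge (where two of the three regions merge), and the zero-weight triangulation edges --- all reduce to the generic case once one observes that merging two regions only weakens the imbalance condition. Linear running time then follows because the walk visits each face at most $O(1)$ times and the underlying data structures (dual adjacencies and path-weight counters) can be maintained in constant amortized time per step.
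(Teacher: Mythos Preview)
The paper does not supply its own proof of this lemma: it is quoted directly from Thorup~\cite[Lemma~2.3]{ThorupJACM04}, and the only justification given in the surrounding text is the sentence pointing to the Lipton--Tarjan fundamental cycle separator (``for any spanning tree $T$ in a triangulated planar graph, there is a non-tree edge $e$ such that the unique simple cycle in $T\cup\{e\}$ is a balanced separator''). So there is nothing in the paper to compare against beyond that reference.

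Your proposal \emph{is} the Lipton--Tarjan argument the paper alludes to, carried out at the level of triangular faces rather than single non-tree edges, which is exactly how one obtains three root paths and the $1/2$ balance (as opposed to the $2/3$ balance from a single fundamental cycle). So you are aligned with what the paper points to.

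One small correction to your monotonicity claim: the quantity that is guaranteed to strictly decrease is not the \emph{weight} of the heavy region (vertex weights may be zero, so the weight could stay the same across a step), but rather the set of faces on the heavy side. The cleanest way to phrase the walk is via the interdigitating dual tree $T^*$ formed by the non-tree edges: root $T^*$ at the outer face and descend, at each step moving into the child whose enclosed weight exceeds $W(H)/2$. Termination is then immediate because $T^*$ is finite and you never revisit a face; at the face where no child is heavy, all three regions have weight at most $W(H)/2$ (the region across the parent edge is the complement of a set of weight $>W(H)/2$, and the other two regions are insides of child edges). This also dissolves your worry about degenerate faces: a tree edge on $f$ simply means one ``child'' region is empty, which cannot be the heavy one.
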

Since $T$ is a shortest-path tree, a separator $S$ consists of at most three {\em shortest} paths.

\subsection{Approximate Distance Oracle and Labeling Scheme}
\label{sec:prelim:thorup}
The approximate distance oracle for planar graphs by Thorup~\cite{ThorupJACM04} 
can be distributed as a distance labeling scheme~\cite{PelegLabeling00,conf/esa/GavoilleKKPP01,journals/jal/GavoillePPR04}. 
Each node $u$ is assigned a label $\mathcal L(u)$ such that there is a decoding function $\mathcal D(\cdot,\cdot)$ 
that approximates the distance between $u$ and $v$ based on their labels only. 

\begin{lemma}[{Thorup~\cite[Theorem~3.16 and Theorem~3.19]{ThorupJACM04}}]
There is an algorithm that computes $(1+\epsilon)$--approximate distance labels for an $n$--node planar graph with the following properties. 
The algorithm runs in time $O(n\epsilon^{-2}\log^3n)$ and outputs labels of length 
 $O(\epsilon^{-1}\log n)$ with query time $O(\epsilon^{-1})$.
\label{lemma:thoruplabel}
\end{lemma}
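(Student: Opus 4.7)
The plan is to build the labels via a recursive shortest-path separator decomposition. Apply Lemma~\ref{lem:thorupseparator} to $G$ to obtain at most three shortest paths forming a balanced separator, recurse on each resulting region, and continue until regions are trivial. Balance implies that the recursion tree has depth $O(\log n)$. For each node $u$ of $G$, I record, along the unique root-to-leaf path in the recursion tree containing $u$, the $O(\log n)$ families of separator shortest paths; call this sequence the \emph{stack} $\mathcal{S}(u)$. The key structural fact is that for any pair $u,v$, at the deepest recursion level at which both still lie in the same region the exact $u$--$v$ shortest path must cross at least one of the three shortest paths separating that region, so it suffices to approximate $\min_{p \in P}(d(u,p)+d(p,v))$ for each $P \in \mathcal{S}(u) \cap \mathcal{S}(v)$ and take the minimum.

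To control the label size I would apply the portal (or \emph{connection}) trick. For each $u$ and each $P \in \mathcal{S}(u)$, choose an $\epsilon$-cover $\Pi_u(P) \subseteq P$ of only $O(\epsilon^{-1})$ portals, placed at geometrically increasing offsets from the closest point on $P$ to $u$, and store the pairs $(p,d(u,p))$. The key geometric estimate, crucially using that $P$ is a \emph{shortest} path, is that for any $q \in P$ one can find $p \in \Pi_u(P)$ with $d(u,p)+d(p,q) \le d(u,q)+\epsilon\cdot d(u,q)$, so that provided $v$ has a compatible portal set on $P$ the quantity $\min_p(d(u,p)+d(p,v))$ over the common portals is within $(1+\epsilon)$ of the true via-$P$ distance. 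Summing $O(\epsilon^{-1})$ entries per separator path over $O(\log n)$ levels yields labels of size $O(\epsilon^{-1}\log n)$.

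The query algorithm scans, for every $P$ shared between $\mathcal{S}(u)$ and $\mathcal{S}(v)$, the common portals and returns the overall minimum. A naive pass does $O(\epsilon^{-1})$ work on each of $O(\log n)$ levels, giving $O(\epsilon^{-1}\log n)$. To hit the claimed $O(\epsilon^{-1})$ bound, I would interleave the per-level portal lists of $u$ (respectively $v$) into a single list sorted by position along their containing paths and tagged by level, so that a single synchronized scan of $\mathcal{L}(u)$ and $\mathcal{L}(v)$ of length $O(\epsilon^{-1})$ simultaneously processes the contributions from all levels. For preprocessing, each level requires one planar single-source shortest-path computation on each region ($O(n\log n)$ per level) plus the construction of portal lists, which multiplies by $\epsilon^{-1}$; applied to $O(\log n)$ levels, and recomputing the distance oracle from scratch at an outer geometric scale $O(\epsilon^{-1}\log n)$ times, this yields $O(n\epsilon^{-2}\log^3 n)$ in total.

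The principal obstacle I expect is not the stretch bound, which follows routinely from the geometric spacing of the portal cover and the separator-crossing property, but rather the $O(\epsilon^{-1})$ query time. Achieving it requires arranging the labels so that information from all $O(\log n)$ levels can be extracted in one synchronized pass rather than by a separate scan per level, while at the same time keeping the total label length at $O(\epsilon^{-1}\log n)$. Making these two requirements compatible---so that the merged structure is both compact and queryable in time independent of $\log n$---is the technically delicate step, and corresponds to the content of Thorup's Theorems~3.16 and~3.19.
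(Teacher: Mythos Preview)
The paper does not prove this lemma at all: it is quoted verbatim as a black-box result from Thorup~\cite[Theorems~3.16 and~3.19]{ThorupJACM04}, with no accompanying argument. So there is nothing in the paper to compare your proposal against; your sketch is effectively an attempt to reconstruct Thorup's own proof, not the present paper's.

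That said, a brief remark on the sketch itself. Your high-level picture---recursive shortest-path separators of depth $O(\log n)$, per-node $\epsilon$--covers of size $O(\epsilon^{-1})$ on each separator path, label length $O(\epsilon^{-1}\log n)$---is exactly Thorup's construction and is fine. The one place where your outline does not quite work is the mechanism you propose for the $O(\epsilon^{-1})$ query time. Merging the per-level portal lists into ``a single list \dots\ of length $O(\epsilon^{-1})$'' cannot be right as stated: there are $O(\epsilon^{-1})$ portals on each of $O(\log n)$ levels, so the concatenated list has length $\Theta(\epsilon^{-1}\log n)$, and a synchronized scan of it costs $\Theta(\epsilon^{-1}\log n)$, not $O(\epsilon^{-1})$. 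Thorup's actual reduction is different: one observes that the $u$--$v$ shortest path must cross the separator at the \emph{single} level where $u$ and $v$ first get separated (the lowest common ancestor of their leaves in the decomposition tree), so it suffices to query at that one level. Locating this level is done in $O(1)$ time via standard LCA machinery encoded in the labels, after which the $O(\epsilon^{-1})$ portals on the at most three separator paths at that level are scanned. You correctly flag this step as the delicate one, but the specific fix you describe would not achieve the bound.
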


One of the key ideas in Thorup's oracle is the concept of {\em shortest-path tree separability} 
as outlined in the previous section (Lemma~\ref{lem:thorupseparator}). 
A constant number of shortest paths $Q$ separate the graph into components of at most
half the size. Another key idea is to approximate shortest $s-t$ paths that intersect a separator path~$Q$. 
Let the shortest $s-t$ path intersect $Q$ at a particular node $q\in Q$. If we are willing to accept 
a slightly longer path, we can restrict the number of possible intersections from $\abs{Q}$ 
to $O(1/\epsilon)$, for $\epsilon>0$. 
We use the following variant of Klein and Subramanian~\cite{journals/algorithmica/KleinS98} (which they also call a {\em crossing substitute}). 
\begin{lemma}[{Klein and Subramanian~\cite[Lemma~4]{journals/algorithmica/KleinS98}}]
Let $Q$ be a path of length $O(D)$ for an integer $D$. 
Let $C\subseteq Q$ be a set of $O(1/\epsilon)$ equally spaced nodes on $Q$, called {\em the $\epsilon$--cover of $Q$}. 
Any pair of nodes $(u,v)$ at distance $[D,2D)$ whose shortest path intersects $Q$ can be {\em $(1+\epsilon)$--approximated} by a path going through a node $c\in C$, that is, 
\begin{displaymath}
\dist(u,v)\leq\min\limits_{c\in\mathcal C}\dist(u,c)+\dist(c,v)\leq (1+\epsilon)\dist(u,v).
\end{displaymath}
\label{lem:kleinsubramanian}
\end{lemma}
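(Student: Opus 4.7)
The plan is a direct triangle-inequality argument exploiting the fact that $Q$ itself is a subpath of the graph, so graph-distance between two nodes on $Q$ is at most their distance measured along $Q$. Let $P$ be a shortest $u$-$v$ path and let $q\in Q\cap P$ be any node where $P$ meets the separator path. Since $|Q|=O(D)$, we may choose the cover $C$ to consist of nodes spaced at most $\epsilon D/2$ apart along $Q$, using only $O(1/\epsilon)$ nodes. Let $c\in C$ be the cover node closest to $q$ along $Q$, so that the subpath of $Q$ between $q$ and $c$ has length at most $\epsilon D/4$, and in particular $\dist(q,c)\leq \epsilon D/4$.

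The lower bound $\dist(u,v)\leq \min_{c\in C}\dist(u,c)+\dist(c,v)$ is just the triangle inequality. For the upper bound, I would split $P$ at $q$ to write $\dist(u,v)=\dist(u,q)+\dist(q,v)$, and then use the triangle inequality twice:
\begin{align*}
\dist(u,c)+\dist(c,v) &\leq \bigl(\dist(u,q)+\dist(q,c)\bigr)+\bigl(\dist(c,q)+\dist(q,v)\bigr)\\
&= \dist(u,v)+2\dist(q,c)\leq \dist(u,v)+\epsilon D/2.
\end{align*}
Since $\dist(u,v)\geq D$ by hypothesis, the additive error $\epsilon D/2$ is at most $(\epsilon/2)\dist(u,v)$, giving $\min_{c\in C}\dist(u,c)+\dist(c,v)\leq (1+\epsilon/2)\dist(u,v)\leq (1+\epsilon)\dist(u,v)$, as required. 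Adjusting the cover spacing by a constant factor handles the original $(1+\epsilon)$ bound exactly.

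There is no real obstacle here; the only step that requires attention is fixing the cover spacing so that the two applications of the triangle inequality absorb the slack within the stated approximation factor, and verifying that the assumption $\dist(u,v)\in[D,2D)$ (in particular the lower bound $\dist(u,v)\geq D$) is what converts the additive error $O(\epsilon D)$ into a multiplicative $(1+\epsilon)$ error. The upper bound $\dist(u,v)<2D$ is not actually needed for the approximation guarantee itself, but it is what makes this lemma useful in the recursive scheme: it means the cover $C$ tailored to scale $D$ suffices for all pairs in that distance band, so only $O(\log(nW))$ scales (and hence a total of $O(\epsilon^{-1}\log(nW))$ cover nodes per separator path $Q$) need to be maintained.
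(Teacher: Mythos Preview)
Your argument is correct and is precisely the expansion of the paper's one-line justification, which reads: ``The proof is based on the observation that any \emph{detour} using node $c\in C$ instead of node $q$ could cause an additive error of at most $O(\epsilon D)$.'' Your triangle-inequality computation is exactly that observation made explicit, so there is no methodological difference to discuss. One minor remark: your closing commentary about maintaining $O(\epsilon^{-1}\log(nW))$ cover nodes per path across $O(\log(nW))$ distance scales is not how this particular paper deploys the lemma (here a single global $\epsilon$--cover per separator path suffices because the graph diameter is already bounded by $O(\Delta)$ at the point of use), but that is extraneous to the proof of the lemma itself.
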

The proof is based on the observation that any {\em detour} using node $c\in C$ instead of node $q$ 
could cause an additive error of at most $O(\epsilon D)$.

\subsection{Distance--$\delta$ Dominating Sets}
Let $\delta < n$ be an integer.  A {\em $\delta$--dominating set} of a graph $G = (V,E)$ is a subset $L \subseteq V$ of nodes such that for each $v\in V$ there is a node $l\in L$ at distance at most $\delta$. 
It is well-known that there is a $\delta$--dominating set $L$ of size at most $\abs{L}\leq n/(\delta+1)$ and that such a set $L$ can be found efficiently~\cite{KP98}. 

A related but hard (even for degree--3 planar graphs) problem is the {\em minimum $p$--center problem}~\cite{PlesnikCentersComplexity,DyerFrieze85,PlesnikPCenters}, where we want to compute a 
set $U\subseteq V$ of size $p=\abs{U}$ such that the maximum distance from any node $v\in V$ to $U$ is minimized. We are mainly interested in that worst-case distance, for which the 
best guarantee is $n/p$.

\subsection{Sparse Neighborhood Covers}
\label{sec:sparsecover}
Busch, LaFortune, and Tirthapura~\cite{conf/podc/BuschLT07} provide 
{\em sparse covers}~\cite{conf/focs/AwerbuchP90a,journals/siamcomp/AwerbuchBCP98,conf/spaa/AbrahamGMW07,conf/podc/BuschLT07} for planar graphs. 
\begin{lemma}[{Busch, LaFortune, and Tirthapura~\cite{conf/podc/BuschLT07}}]
For any planar graph $G$ and for any integer $r$,  there is a {\em sparse cover}, which is a collection of connected subgraphs $(G_1,G_2,\dots)$, with the following properties: 
\begin{itemize}
\item for each node $v$ there is at least one subgraph $G_i$ that contains all neighbors within distance $r$, 
\item each node $v$ is contained in at most $30$ subgraphs, and 
\item each subgraph has {\em radius} at most $\rho=24r-8$\\(a graph has radius $\rho$ if it contains a spanning tree of depth $\rho$). 
\end{itemize}
Furthermore, such a sparse cover can be computed in polynomial time. 
\label{lemma:sparsecover}
\end{lemma}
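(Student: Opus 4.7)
The plan is to build the cover by combining a BFS-layered slab decomposition with a ball-growing step inside each slab, exploiting planarity to bound the overlap. First, root a BFS tree $T$ at an arbitrary vertex and group BFS layers into overlapping slabs of width $\Theta(r)$, with consecutive slabs overlapping on an interval of length greater than $2r$ so that for every vertex $v$ the ball $B(v,r)$ sits entirely inside some slab $S_i$. With the spacing chosen correctly, each vertex belongs to only a constant number of slabs, and each slab is a planar subgraph admitting a spanning forest of depth $O(r)$ obtained by restricting $T$ and attaching a virtual root for each maximal subtree hanging outside the slab. This takes care of the ``vertical'' axis of the cover.

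Inside each slab, the task reduces to producing a padded low-diameter decomposition: a collection of connected clusters of radius $O(r)$ partitioning the slab, such that for every vertex $v$ the ball $B(v,r)$ is entirely contained in some single cluster. For general graphs the standard ball-growing argument achieves only $\Omega(\log n)$ overlap, so the planarity of the slab must be used directly. The plan is to invoke a shortest-path separator decomposition in the flavor of Lemma~\ref{lem:thorupseparator}: recursively split the slab along a constant number of shortest paths, grow balls around a net of chosen centers on each separator, and organize the resulting regions into a constant number of ``color classes'' of pairwise-disjoint radius-$O(r)$ balls whose union covers every $r$-neighborhood. The outer cover is then obtained by taking, across all slabs and all color classes, these clusters.

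The radius bound $24r-8$ follows from summing the slab width, the ball-growing radius, and a small padding margin; the overlap bound of $30$ follows by multiplying the (constant) number of slabs containing any vertex by the (constant) number of color classes per slab. The main obstacle is getting an \emph{absolute} small constant like $30$ for the overlap instead of a $\Theta(\log n)$ bound: this step is where planarity is indispensable, and carefully tuning the slab width, the padding thickness, and the separator-based sub-decomposition so that the overlap, the radius, and the $r$-neighborhood property all fit together simultaneously is the delicate part of the argument. Polynomial-time computability is immediate since every ingredient---BFS, shortest-path separator, and ball growing---runs in polynomial time.
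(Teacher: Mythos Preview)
The paper does not prove this lemma at all: it is quoted verbatim as a result of Busch, LaFortune, and Tirthapura and is used as a black box. So there is no ``paper's own proof'' to compare against; the relevant comparison is with the original source.

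Your high-level outline is in the right spirit. The first step --- BFS layering into overlapping slabs of width $\Theta(r)$ so that every $r$--ball lies in some slab and every vertex lies in $O(1)$ slabs --- is exactly how Busch--LaFortune--Tirthapura (and, before them, Klein--Plotkin--Rao) begin. The second step, however, is where your plan diverges and becomes a genuine gap rather than a different route. You propose to decompose each slab by recursive shortest-path separators (Lemma~\ref{lem:thorupseparator}) and then ``organize the resulting regions into a constant number of color classes.'' Recursive separation naturally produces $\Theta(\log n)$ levels, and you give no mechanism for collapsing these into $O(1)$ classes while keeping each class a collection of disjoint radius-$O(r)$ clusters that still pad every $r$--ball. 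You flag this yourself as ``the delicate part,'' but it is in fact the entire content of the lemma: the slab step is easy, and turning a slab into a constant-overlap padded cover is where planarity must be exploited in a way your sketch does not supply.

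For reference, the actual Busch--LaFortune--Tirthapura argument does not use recursive separators inside a slab. It performs a \emph{second} round of shortest-path/BFS-style layering within each slab (in the KPR tradition), exploiting that after two rounds in a planar graph the resulting pieces already have diameter $O(r)$; the specific constants $24r-8$ and $30$ fall out of the explicit slab widths and overlaps chosen in those two rounds. If you want to reconstruct the proof, replace your separator-plus-coloring step by a second overlapping layering pass and track the constants.
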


Note that, since each node is in at most $O(1)$ subgraphs $G_i$, the total size of the cover is~$O(n)$. 

While for minor-free graphs the best-known construction algorithm requires polynomial time, their algorithm 
for planar graphs appears to actually run in time $O(n\log n)$. Let $\sncn$ denote the time required to compute sparse neighborhood covers. 
We state our preprocessing bounds with respect to $\sncn$.

\subsection{Exact Distance Oracles and Cycle MSSP}

We use the Cycle Multiple-Source Shortest Paths (MSSP) data structure~\cite{MS12}, which is a variant of Klein's MSSP data structure~\cite{conf/soda/Klein05}.

\begin{lemma}[{Cycle MSSP~\cite[Theorem~4]{MS12}}]
Given a directed planar graph $G$
on $n$ nodes and a simple cycle $C$ with $c = O(\sqrt n)$ nodes, there is an algorithm that preprocesses $G$
in $O(n \log^3 n)$ time to produce a data structure of size $O(n \log\log c)$
that can answer the following queries in $O(c \log^2c \log\log c)$ time:
for a query node $u$, output the distance from $u$ to all the nodes of $C$.
\label{lemma:cyclemssp}
\end{lemma}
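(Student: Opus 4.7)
The plan is to reduce the cycle-MSSP problem to two instances of a face-based MSSP, obtained by cutting along $C$. Because $C$ is a simple cycle in a planar embedding of $G$, it bounds two open regions; duplicating the vertices of $C$ and assigning each edge of $G$ to the side it lies in yields two planar subgraphs $G_{\text{in}}$ and $G_{\text{out}}$, each containing (a copy of) $C$ as a face of its embedding. Distances in the original $G$ that do not cross $C$ survive the cut, and distances that do cross $C$ will be recovered by combining the two sides through a dense distance graph (DDG) on $V(C)$.

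I would next apply a variant of Klein's MSSP (the pivot-around-the-face construction with persistent search trees) to each of $G_{\text{in}}$ and $G_{\text{out}}$ with $C$ as the source face. Klein's original version achieves $O(n\log n)$ preprocessing and $O(\log n)$ per point-to-point distance query, via persistent balanced BSTs across $O(n)$ pivots around the face. To obtain the advertised $O(n\log\log c)$ space and the $\log\log c$ factor in the query, I plan to replace the generic BSTs with FR-style priority structures whose keys range over the $c = O(\sqrt{n})$ positions on $C$, supporting insert/delete/findmin in $O(\log\log c)$ amortized time and whose persistent variant spends $O(\log\log c)$ space per update. The $O(n\log^3 n)$ preprocessing budget accommodates the initial Dijkstra on $G$, the pivot bookkeeping, and the persistence scaffolding.

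Given these two MSSPs, a query for $u$ proceeds in two phases. Assume without loss of generality that $u \in V(G_{\text{in}})$. First, read off $\dist_{G_{\text{in}}}(u,x)$ for all $x \in V(C)$ using $c$ point queries, at total cost $O(c \log\log c)$. Second, run FR-Dijkstra on the combined DDG over $V(C)$ formed by the two Monge pieces induced by $G_{\text{in}}$ and $G_{\text{out}}$ (each Monge because $C$ is a face of the corresponding side), seeded with a virtual source $u$ whose edge weights are the values computed in the first phase. FR-Dijkstra accesses DDG entries on demand through MSSP point queries at $O(\log\log c)$ apiece, and returns $\dist_G(u,v)$ for all $v \in V(C)$ in the advertised $O(c \log^2 c \log\log c)$ time.

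The main obstacle is engineering the MSSP so that its dependence on the face size becomes $\log\log c$ rather than $\log n$ while simultaneously compressing its persistent representation to $O(n\log\log c)$ total space: the generic persistent BST costs $\Omega(\log n)$ per update, and replacing it with an FR-style priority queue keyed on $[1,c]$ must be done without inflating the amortized cost of any single pivot, so the total work across all $O(n)$ pivots remains $O(n\log^3 n)$. A secondary but essential correctness point is that composing the two single-side MSSPs via FR-Dijkstra on the merged DDG truly recovers $G$-distances; this follows from the observation that any shortest $u$-to-$v$ path decomposes into maximal subpaths each lying on a single side of $C$, glued at vertices of $C$, and Dijkstra's relaxation on the merged DDG explores exactly these gluings.
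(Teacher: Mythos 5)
You are trying to prove a statement that this paper does not prove at all: Lemma~\ref{lemma:cyclemssp} is imported verbatim from [MS12, Theorem~4], so your attempt has to be judged against that construction. Your outer skeleton is sound: cutting $G$ along $C$ into $G_{\mathrm{in}}$ and $G_{\mathrm{out}}$, computing distances from $u$ to $V(C)$ on $u$'s side, and then gluing the two sides by running FR-Dijkstra on a dense distance graph over $V(C)$ is exactly the kind of composition that works, and your correctness argument (maximal subpaths alternating between the two sides, glued at $C$-vertices) is fine. The genuine gap is the step that carries all of the quantitative content: the claim that Klein's MSSP can be re-engineered to answer point queries in $O(\log\log c)$ time within $O(n\log\log c)$ space by replacing the persistent balanced BSTs with van-Emde-Boas/``FR-style'' structures keyed on the $c$ positions of $C$. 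The $\log n$ factors in Klein's structure do not come from a search structure whose key universe is the face vertices; they come from maintaining, persistently, a dynamic-tree representation of the $n$-node shortest-path tree: each of the $O(n)$ pivots performs subtree weight updates, and each distance query aggregates along a root-to-vertex path of that $n$-node tree. Shrinking the key universe to $[1,c]$ does not touch these operations, so neither the $O(\log\log c)$ query time nor the $O(\log\log c)$ persistence cost per update follows. Note also that the lemma only assumes $c=O(\sqrt n)$, so $c$ may be polylogarithmic or even constant; your claim then amounts to an essentially constant-query, near-linear-space MSSP, far beyond anything known. Since both your phase-one batch (the $c$ point queries) and the on-demand DDG accesses inside FR-Dijkstra are charged $O(\log\log c)$ apiece, the stated time and space bounds all rest on this unsupported data structure---the very obstacle you name at the end is the theorem, and the proposal does not overcome it.

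For comparison, the construction of [MS12] never needs $o(\log n)$-time point queries. After cutting along $C$, it stores dense distance graphs over a hierarchy of divisions of each side in which piece sizes shrink fast enough that there are only $O(\log\log c)$ levels; the $\log\log c$ in the space bound is the number of levels of this hierarchy (each level contributing $O(n)$), not a per-operation overhead, and Klein's MSSP is used only during preprocessing to build the DDGs. A query runs FR-Dijkstra once over the DDGs of the pieces containing $u$ at all levels together with the DDGs attached to $C$; the boundary sizes telescope to $O(c)$, which yields the batched $O(c\log^2 c\log\log c)$ bound without ever providing a fast individual $u$-to-$x$ distance. If you want to salvage your route, you must either actually construct the claimed $O(\log\log c)$-overhead persistent MSSP (which would be a significant result in its own right) or switch to this DDG/FR-Dijkstra hierarchy.
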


We also use an exact distance oracle for small subgraphs. 

\begin{lemma}[{\cite[Theorem~3]{MS12}}]
For any directed planar graph $G$ with non-negative arc lengths, 
there is a data structure that supports 
exact distance queries in $G$ with the following properties: 
the preprocessing time is $O(n\log(n)\log\log(n))$, 
the space required is $O(n\log\log n)$, and 
the query time is $O(\sqrt n\log^2(n)\log\log(n))$. 
\label{lemma:exactdq}
\end{lemma}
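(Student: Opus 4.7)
The plan is to combine a hierarchical cycle-separator decomposition with repeated invocations of the Cycle MSSP data structure of Lemma~\ref{lemma:cyclemssp}. At the top level, Miller's cycle-separator theorem gives a simple cycle $C$ with $c=O(\sqrt n)$ vertices whose removal leaves each side of size at most a constant fraction of $n$; let $G_1,G_2$ be the two closed subgraphs bounded by $C$. I would preprocess $G$ with Cycle MSSP relative to $C$: this costs $O(n\log^3 n)$ time and $O(n\log\log n)$ space, and it enables recovery of the whole vector $(\dist_G(u,c))_{c\in C}$ for any query source $u$ in $O(\sqrt{n}\log^2 n\log\log n)$ time. The construction is then applied recursively to $G_1$ and $G_2$.

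To answer a query $(u,v)$ I would take the minimum of two estimates. First, since any $u$-to-$v$ path that crosses $C$ must contain some $c\in C$, fetch the tables $\dist(u,\cdot)$ and $\dist(\cdot,v)$ on $C$ via Cycle MSSP and return $\min_{c\in C}(\dist(u,c)+\dist(c,v))$; this costs $O(\sqrt{n}\log^2 n\log\log n)$. Second, if $u$ and $v$ lie on the same side of $C$, recurse into that side to handle paths that never cross $C$; if they lie on opposite sides, the path must cross $C$ and the first estimate is exact. Because each recursive call shrinks the ambient size by a constant factor, the dominant $\sqrt{n}\log^2 n\log\log n$ work forms a geometric series in $\sqrt n$ that telescopes to the claimed query bound.

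The preprocessing cost telescopes in the same way once the space is controlled. The key obstacle I expect is keeping the total space at $O(n\log\log n)$ rather than at $O(n\log n\log\log n)$: a naive realization of the recursion would pay $O(n_i\log\log n_i)$ at every subgraph of the separator tree, and there are $\Theta(\log n)$ levels. Avoiding this $\log n$ blowup, which I believe is the technical point of the cited construction in~\cite{MS12}, requires sharing storage across levels of the recursion---for instance by using an $r$-division tailored to cycle separators so that each vertex participates in only $O(1)$ ``active'' Cycle MSSP instances across the whole decomposition, exploiting that Lemma~\ref{lemma:cyclemssp} charges only $O(\log\log c)$ machine words per vertex per instance. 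Once that amortization is in place, the query and preprocessing bounds drop out immediately from the recurrences sketched above.
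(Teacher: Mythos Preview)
The paper does not contain a proof of this lemma; it is quoted verbatim from \cite{MS12} and used as a black box, so there is nothing in the present paper to compare your proposal against.

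A few remarks on the sketch itself. The recursive cycle-separator decomposition with a Cycle MSSP structure at each level is the right architecture, and your query-time analysis via a geometric series in $\sqrt{n}$ is fine. But there is a quantitative mismatch on preprocessing: a single invocation of Cycle MSSP preprocessing (Lemma~\ref{lemma:cyclemssp}) already costs $O(n\log^3 n)$, which exceeds the $O(n\log(n)\log\log(n))$ bound claimed in the lemma you are trying to establish. So as written your plan cannot meet the stated preprocessing time; the construction in \cite{MS12} must either build the Cycle MSSP instances more cheaply in this specific setting or avoid paying the full $O(n\log^3 n)$ at the top level. You also correctly identify the space amortization across levels as the crux, but your resolution (``an $r$-division tailored to cycle separators so that each vertex participates in only $O(1)$ active instances'') is exactly the nontrivial step, and the proposal stops short of carrying it out.
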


\section{A More Compact Distance Oracle with Constant Stretch}
\label{sec:constantapprox}
The overall structure of the distance oracle presented in this section is also reused for the $(1+\epsilon)$--approximate distance oracle.  
The basic idea is to use Thorup's distance labels for long-range distances, which are 
the distances of length at least roughly $\log n$, and to use sparse neighborhood covers (Section~\ref{sec:sparsecover}) for $\log\log n$ different levels
to approximate short-range distances. 

We prove Proposition~\ref{prop:const}, which we restate here. 
\begin{repproposition}{prop:const}
\constthm
\end{repproposition}

\subsection{Preprocessing Algorithm} Let $\epsilon=0.5$ (any constant $\epsilon\in(0,1/2)$ works). 
The algorithm is described by the following pseudocode. 
\begin{tabbing}
{\bf preprocess} $G=(V,E)$\\
{\em (i) Preparing for Long-range Queries}\\
compute a $\delta$--dominating set $L$ with $\delta=\floor{\epsilon^{-1}\log n}$ as in~\cite{KP98}\\
\qquad \= (for graphs with edge weights s.t.~$\sum\limits_{e\in E}w(e)\leq O(n\log^\theta n)$, set $\delta=\floor{\epsilon^{-1}\log^{\theta+1}n}$ instead \\
\> and replace each edge $e$ by $w(e)$ edges before computing the $\delta$--dominating set)\\
for each node $l\in L$\\
\> compute Thorup's distance label~\cite{ThorupJACM04} (see Lemma~\ref{lemma:thoruplabel})\\
for each node $v\in V$\\
\qquad \= compute its nearest {\em landmark} node $l_v$ (the node $l_v\in L$ that minimizes $\dist_G(v,l_v)$)\\
             \> store $(l_v,\dist_G(v,l_v))$\\
{\em (ii) Preparing for Short-range Queries}\\
for every integer $i>0$ with $2^i\leq\ceil{2\epsilon^{-1}\delta}$\\
             \> compute a sparse neighborhood cover with radius $r=2^i$ as in~\cite{conf/podc/BuschLT07}\\
             \> let $\mathcal G^i=\{G_j^i\}$ denote this cover\\
             \> for each node $v\in V$\\
             \> \qquad \= store the list of graphs $G_j^i\in\mathcal G^i$ with $v\in V(G_j^i)$
\end{tabbing}

\paragraph{Space requirements} Each distance label has size $O(\epsilon^{-1}\log n)$ (see Lemma~\ref{lemma:thoruplabel}). 
The space requirement for the data structure computed in the first step is thus $O(n)$. 
In the second step, we iterate through $O(\log(\epsilon^{-2}\log n))=O(\log\log(n)+\log(1/\epsilon))$ levels (for weighted graphs with $\sum w(e)\leq O(n\log^{\theta}n)$ for a constant $\theta\geq0$ we have $O(\theta\log\log n+\log(1/\epsilon))$ levels). At each level $i$, for each node, we 
store a list of graphs $L^i(v)\subseteq\mathcal G^i$ of constant length $\abs{L^i(v)}=O(1)$~\cite{conf/podc/BuschLT07}. Over all levels, the space requirement is thus $O(n\log(\epsilon^{-2}\log n))$. 
Here we assume that identifiers of length $O(\log n)$~bits can be stored using constant space, which is a common assumption in the {\em word RAM model}~\cite{journals/jcss/CookR73}. 

\paragraph{Preprocessing time} The preprocessing time is dominated by the time required to compute the neighborhood covers $\sncn$, 
which is bounded by a polynomial in the number of nodes $n$~\cite{conf/podc/BuschLT07}. 
The dominating set~\cite{KP98}, the distance labels~\cite{ThorupJACM04}, and the nearest landmarks~\cite{Erwig00} can be computed in almost linear time in $n$.

\subsection{Query Algorithm} 
The algorithm is described by the following pseudocode. 

\begin{tabbing}
{\bf query} $(u,v)$\\
return the minimum of the long-range and the short-range query algorithm\\
{\em (i) Long-range Query}\\
\qquad \= return $d_G(u,l_u) + \tilde d_G(l_u, l_v) + d_G(l_v, v)$, where $\tilde d_G(\cdot,\cdot)$ is the estimate obtained from the labels\\
{\em (ii) Short-range Query}\\
binary search for a level $i$ such that $\exists G_j^i\in\mathcal G^i:u,v\in V(G_j^i)$ {\em and} $\not\exists G_{j'}^{i-1}\in\mathcal G^{i-1}:u,v\in V(G_{j'}^{i-1})$\\
\> return $2\rho2^i$, where $\rho$ is the constant for the {\em radius} in Lemma~\ref{lemma:sparsecover} (here: $\rho=24$)
\end{tabbing}

\paragraph{Running time} Computing the long-range result requires time $O(1/\epsilon)$~\cite{ThorupJACM04}. 
For the short-range pairs, 
a binary search among $O(\log(\epsilon^{-2}\log n))$ levels can be done in time $O(\log\log(\epsilon^{-2}\log n))$. 
At each search level we need to compute the intersection of two sets of constant size (recall that each node is in at most $O(1)$ graphs $G_j^i$ per level $i$~\cite{conf/podc/BuschLT07}). 

\paragraph{Stretch analysis} For any pair of nodes $(u,v)$ at distance $\dist_G(u,v)\geq\epsilon^{-1}\delta=\epsilon^{-2}\log n$, the 
long-range algorithm returns a $(1+6\epsilon)$--approximation for $\dist_G(u,v)$, since, using the triangle inequality, 
\begin{eqnarray*}
\dist_G(u,v)  &\leq& \dist_G(u,l_u) + \tilde\dist_G(l_u, l_v) + \dist_G(l_v, v)\\
&\leq&\delta + (1+\epsilon)\dist_G(l_u, l_v) + \delta\\
&\leq&\delta + (1+\epsilon)(\delta+\dist_G(u, v)+\delta) + \delta\\
&\leq&(1+\epsilon)\dist_G(u, v) + (4+2\epsilon)\delta.  
\end{eqnarray*}
For nodes at distance $\dist_G(u,v)<\epsilon^{-1}\delta$, the short-range algorithm returns a $4\rho$--approximation 
(recall that $\rho$ is the constant for the {\em radius} in~\cite{conf/podc/BuschLT07}). 
The graph $G_j^i$ with $u,v\in V(G_j^i)$ at level $i$  is a certificate that $d_G(u,v)\leq2\rho2^i$. 
Since there is no graph $G_{j'}^{i-1}$ with $u,v\in V(G_{j'}^{i-1})$ at level $i-1$, the $u$-to-$v$ distance satisfies $d_G(u,v)>2^{i-1}$. 

We conclude this section by noting that, using Lemma~\ref{lemma:sparsecover}, we have $\rho=24$ and thus the stretch of this oracle is at most 96. 
The construction of Busch et al.~\cite{conf/podc/BuschLT07} works for general minor-free graphs. There may be a construction with smaller radius for planar graphs. 

\section{$(1+\epsilon)$--Approximate Distance Oracle}
We prove the main result (Theorem~\ref{thm:eps}). 
The distance oracle presented in this section is based on the oracle with constant stretch as described in Section~\ref{sec:constantapprox}. 

\subsection{Overview}
We first run the preprocessing algorithm of Section~\ref{sec:constantapprox} with $\epsilon$ being the actual value chosen by the application divided by 6. 
Note that the only distances approximated with stretch more than $1+\epsilon$ are the ones in the range $[1,\epsilon^{-2}\log(n)]$. 
This range of logarithmic size is then handled by data structures for log-logarithmically many levels. 
At level $i$, we are interested in distances in the range $[2^i,2^{i+1})$. 

The constant-stretch distance oracle uses a very crude estimate: if two nodes are contained in the same graph $G_j^i$ with diameter $O(2^i)$ but they are not together in a graph at level $i-1$, they must be 
at distance $\Theta(2^i)$ ($O(2^i)$ due to the diameter of $G_j^i$ and $\Omega(2^i)$ since the two nodes were not together in any graph at level $i-1$). 
In the following, we provide an oracle that outputs a more precise estimate for pairs of nodes at distance $\Theta(2^i)$. 

We are aware of exact oracles for planar graphs that can answer {\em bounded-length} distance queries, which are distance queries for pairs at constant distance.
For planar graphs, Kowalik and Kurowski~\cite{journals/talg/KowalikK06} provide such a distance oracle that uses linear space (see~\cite{conf/focs/DvorakKT10} for an extension to sparse graphs). 
However, we cannot use their data structures, since 
the query time of their oracle is exponential in the length. 
In our oracle, short distances may be up to  {\em logarithmic} in~$n$. 
Another approach would be to use a distance oracle for planar graphs with bounded tree-width (see~\cite{MS12}): 
since diameter $\Theta(2^i)$ implies tree-width $w=\Theta(2^i)$~\cite{journals/algorithmica/Eppstein00,journals/algorithmica/DemaineH04}, 
the query time can be made almost proportional to $w$. However, here we aim at query time almost proportional to $1/\epsilon$ instead. 

For the $(1+\epsilon)$--stretch oracle, we introduce an additional data structure for the level graphs $G_j^i$. 
The difference to the oracle in Section~\ref{sec:constantapprox} --- one of our main technical contributions ---  
is the following data structure that can answer approximate distance queries with {\em additive stretch} $\epsilon\Delta$ for 
planar graphs with diameter~$O(\Delta)$. 

\def\rpqthm{For any integer $\Delta$, for any $\epsilon>0$, and for any planar graph on $n$ nodes with diameter $C\cdot\Delta$ for any constant $C>1$ 
there is an approximate distance oracle with {\em additive stretch}  $\epsilon\Delta$ 
using space $O(Cn\log\log(1/\epsilon)\log^*(n))$ and query time $O(C\epsilon^{-1}\log^2(1/\epsilon)\log\log(1/\epsilon)\log^*(n))$. Furthermore, this distance oracle can be computed in time $O(Cn\epsilon^{-2}\log^3(n)\log^*(n))$.}

\begin{theorem}[Additive-Stretch Approximate Distance Oracle]
\rpqthm
\label{thm:rpq}
\end{theorem}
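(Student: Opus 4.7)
The plan is to adapt Thorup's shortest-path separator construction to exploit the bounded diameter $C\Delta$, so that each separator's $\epsilon$-cover has only $O(C/\epsilon)$ points, and to organize the recursion as an iterated shortest-path $r$-division of depth $O(\log^*n)$ rather than the naive balanced decomposition of depth $O(\log n)$. First I would apply Lemma~\ref{lem:thorupseparator} to $G$ to obtain a separator $S$ made of three shortest paths; since the diameter is $C\Delta$, each has length at most $C\Delta$. Following Lemma~\ref{lem:kleinsubramanian}, I would install an $\epsilon$-cover on each separator with spacing $\Theta(\epsilon\Delta)$, giving $O(C/\epsilon)$ cover points on which any shortest path crossing $S$ can be redirected with additive error $O(\epsilon\Delta)$. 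For every vertex $x$ and every cover point $c$ I would store the quantized distance $\dist(x,c)$ at granularity $\epsilon\Delta$; since all relevant distances are bounded by $O(C\Delta)$, each such value fits in $O(\log(C/\epsilon))$ bits.

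Next I would recurse on the components of $G\setminus S$, but organized as a shortest-path $r$-division whose region sizes decrease iterated-logarithmically at each level ($r_0=n$, $r_{i+1}=O(\log r_i)$), stopping when pieces become small enough to be handled by a single application of the exact oracle of Lemma~\ref{lemma:exactdq} or by a direct distance table. This yields $O(\log^*n)$ levels, each of which contributes exactly one ancestor separator per node. The per-node per-level storage is a few machine words encoding distances to the $O(C/\epsilon)$ cover points of that level's separator, compactly represented by taking successive differences along the separator path (the distances from $x$ to consecutive cover points change by at most the spacing), so that the amortized incremental cost is $O(\log\log(1/\epsilon))$ bits per cover point. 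Summed over levels this gives space $O(Cn\log\log(1/\epsilon)\log^*n)$ matching the claim; the preprocessing cost is dominated by running Thorup's separator construction and Dijkstra-from-cover at each level, giving $O(Cn\epsilon^{-2}\log^3(n)\log^*n)$.

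For a query $(u,v)$, I would walk up the decomposition tree from the leaves containing $u$ and $v$ to their lowest common ancestor region; at each of the $O(\log^*n)$ levels along the way I would enumerate the $O(C/\epsilon)$ cover points of the current separator and compute $\min_c(\dist(u,c)+\dist(c,v))$ from the stored encodings, returning the overall minimum (combined with the base-case oracle answer if $u$ and $v$ end up in the same leaf region). Correctness holds because the true shortest $u$-$v$ path either stays in a leaf region, where it is answered exactly, or crosses some level's separator, where the corresponding cover detour overshoots by at most $\epsilon\Delta$. The per-level work is $O(C\epsilon^{-1}\log^2(1/\epsilon)\log\log(1/\epsilon))$ to decode the incremental encoding and scan the cover, so the overall query time is $O(C\epsilon^{-1}\log^2(1/\epsilon)\log\log(1/\epsilon)\log^*n)$.

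The main obstacle I anticipate is constructing the shortest-path $r$-division that simultaneously has depth $O(\log^*n)$, respects the shortest-path separator structure (so that Lemma~\ref{lem:kleinsubramanian} can be applied at every level), and keeps every vertex on the boundary of only $O(1)$ regions at each level (so that the $\log\log(1/\epsilon)\log^*n$ per-node space accounting goes through). A second delicate point is ensuring that the additive error remains $\epsilon\Delta$ overall rather than $\epsilon\Delta\cdot\log^*n$: the argument must identify a single critical level at which the cover detour is charged, with deeper recursive levels only providing refinements that do not compound the error, and this in turn constrains how the cover granularity must be rescaled to the local diameter of each sub-region.
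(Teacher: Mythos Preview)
Your high-level skeleton---iterated shortest-path $r$-divisions with $\epsilon$-covers on the separator paths and recursion depth $O(\log^*n)$---matches the paper. But the two mechanisms that actually produce the stated space and query bounds are missing, and without them your accounting does not go through.

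\textbf{Node-to-portal distances are not stored explicitly.} Your plan records, for every vertex $x$, the (quantized, difference-encoded) distances to the $O(C/\epsilon)$ portals on its piece's boundary. Even with optimal incremental encoding, successive differences $d(x,c_{i+1})-d(x,c_i)$ along a separator path can independently be $\pm\Theta(\epsilon\Delta)$ (quantized: $\pm O(1)$), so the sequence carries $\Omega(C/\epsilon)$ bits per vertex; there is no route to $O(\log\log(1/\epsilon))$ bits per cover point. The paper instead stores these distances \emph{implicitly} via the Cycle MSSP structure of Lemma~\ref{lemma:cyclemssp}: for each piece $P$ and each boundary path $Q$, the $O(1/\epsilon)$ portals $C(Q)$ are placed on a face, and a Cycle MSSP structure of size $O(|P|\log\log(1/\epsilon))$ is built, from which all $d(u,c)$, $c\in C(Q)$, are recovered at query time in $O(\epsilon^{-1}\log^2(1/\epsilon)\log\log(1/\epsilon))$. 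That is the origin of both the $\log\log(1/\epsilon)$ in the space and the $\log^2(1/\epsilon)\log\log(1/\epsilon)$ in the query time.

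\textbf{The query is two-hop, not one-hop.} Your query computes $\min_c\bigl(d(u,c)+d(c,v)\bigr)$ over one separator's cover, but $u$ and $v$ sit in different pieces of the $r$-division and only know distances to their \emph{own} boundary portals $\mathcal C(P_u)$ and $\mathcal C(P_v)$---there is no common $c$. Within a single recursion level the $r$-division's separator tree already has depth $O(\log n)$; storing per-vertex distances to all ancestor separators would reinstate the $\log n$ factor you are trying to remove. The paper stores ancestor-portal distances only for \emph{portals} (there are $O(n\epsilon/\log n)$ of them, each with $O(\epsilon^{-1}\log n)$ ancestors, for $O(n)$ total), and answers a query by the two-hop estimate $d(u,c_u)+d(c_u,q)+d(q,c_v)+d(c_v,v)$ with $c_u\in\mathcal C(P_u)$, $c_v\in\mathcal C(P_v)$, and $q$ on the LCA separator; the minimization over $(c_u,q)$ uses that there is one \emph{global} $\epsilon$-cover per path and hence the Monge property, so FR-Dijkstra runs in $O(\epsilon^{-1}\log(1/\epsilon))$.

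Your worry about additive error compounding across $\log^*n$ levels is unfounded and needs no granularity rescaling: the query takes the minimum over all recursion levels, and at the unique level where the true path first leaves the common piece, the two-hop detour touches at most three $\epsilon$-covers (boundary of $P_u$, LCA separator, boundary of $P_v$), giving additive error $O(\epsilon\Delta)$ independent of the depth.
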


Let us emphasize that the oracle in Theorem~\ref{thm:rpq} works for {\em edge-weighted} planar graphs as well. In the case of weighted graphs, we require that the {\em weighted} diameter 
(defined by the {\em length} of the longest shortest path) is bounded by $C\cdot\Delta$.

\subsection{High-level Construction: Preprocessing and Query Algorithms}
We now use the additive-stretch oracle as in Theorem~\ref{thm:rpq} to prove Theorem~\ref{thm:eps}, which we restate here. 

\begin{reptheorem}{thm:eps}
\mainthm
\end{reptheorem}

\begin{proof}[Proof of Theorem~\ref{thm:eps}] 
We describe the peprocessing and query algorithms. 
Since these algorithms are very similar to the algorithms in Section~\ref{sec:constantapprox}, we mainly highlight the differences. 
\paragraph{Preprocessing Algorithm} We run the preprocessing algorithm of Section~\ref{sec:constantapprox} with $\epsilon$ being the actual value chosen by the application divided by a small constant (six suffices). 
For each level $2^i$, for each graph $G_j^i$, we compute the additive-stretch oracle as in Theorem~\ref{thm:rpq}. The total space requirement per level is $\spacecomplexity$. 
The preprocessing time again depends on the time $\sncn$  required to compute the sparse neighborhood covers, since, for each level $2^i$, we compute a sparse neighborhood cover and the additive-stretch oracle in each subgraph (Theorem~\ref{thm:rpq}). 
\paragraph{Query Algorithm} We use the query algorithm of Section~\ref{sec:constantapprox} with the following adaptation. 
After the binary search for the lowest level $2^i$ with a graph $G_j^i$ that contains both $u$ and $v$, we compute the result of the additive-stretch oracle for this level {\em and also} 
for the $\ceil{\log_2(2\rho)}$ levels above ($\rho$ again denotes the radius in Lemma~\ref{lemma:sparsecover}). The reason for this is that $u$ and $v$ may be in $G_j^i$ at level $2^i$ but 
$u$ and $v$ may be at distance $cD$ for some $c\in[1,2\rho]$. At a lower level, it is not guaranteed that $G_j^i$ actually contains an approximate shortest path. 
Since $\rho=O(1)$, the query time is at most~$\querytime$. 
\end{proof}

\subsection{Polynomial Weights}
The proof of Theorem~\ref{thm:polyweights} is based on the proof of Theorem~\ref{thm:eps}. 

\begin{proof}[Proof of Theorem~\ref{thm:polyweights}] 
There are two differences to the proof of Theorem~\ref{thm:eps}. 
\paragraph{Preprocessing Algorithm} We need to consider $O(\log n)$ levels instead of $O(\log\log n)$ levels, since a node may be at distance $O(poly(n))$ from its nearest dominating node. The space complexity increases from $\spacecomplexity$ to $\spacecomplexitypoly$. We also preprocess Thorup's distance oracle for $\epsilon=1/2$ (any constant works). The space required for this is $O(n\log n)$ and thus the overall space does not increase any further. 
\paragraph{Query Algorithm} Instead of running a binary search to find the right level, we can now just query Thorup's distance oracle for constant $\epsilon$ to identify the right level. Everything else remains the same. The query time is thus $\querytimepoly$. 
\end{proof}

\section{Additive-Stretch Approximate Distance Oracle}
\label{sec:additive}

This section is devoted to the proof of Theorem~\ref{thm:rpq}. While there are distance oracles for special cases of planar graphs and
special kinds of queries~\cite{conf/stacs/DjidjevPZ95,journals/algorithmica/DjidjevPZ00,conf/icalp/DjidjevPZ91,stoc/ChenX00,journals/talg/KowalikK06,MS12}, 
we unfortunately cannot use them for our purpose. 

Let us first restate Theorem~\ref{thm:rpq}. 

\begin{reptheorem}{thm:rpq}
\rpqthm
\end{reptheorem}

\subsection{Overview}

We adapt Thorup's distance oracle as follows. During preprocessing, we recursively separate the planar graph $G$ into 
at least two pieces each with at most half the weight (Lemma~\ref{lem:thorupseparator}). Analogously to computing an 
$r$--division~\cite{journals/siamcomp/Frederickson87}, 
we  separate $G$ in a way such that both {\em (i)} the sizes of the resulting pieces are balanced and {\em (ii)} the 
boundary $\partial P$ of each piece $P$ consists of at most a constant number of shortest paths (see \cite[Section~2.5.1]{ThorupJACM04} for a similar construction). 
We stop the separation as soon as subgraphs have logarithmic size. 

There are three types of shortest $s-t$ paths we need to consider. 
\begin{enumerate}
\item Any shortest path between nodes in two different pieces must pass through a boundary path. 
\item For nodes in the same piece, the shortest path may leave through one boundary path and re-enter through another boundary path. 
\item For nodes in the same piece~$P$, the shortest path may lie entirely within $P$. 
\end{enumerate}

The third type of paths is handled by recursion. 

The first two types of paths are handled as follows. 
Instead of letting the $s-t$ path pass through any node of to boundary, that is, any node on a separator path $q\in Q$, 
we restrict the possible portals on the boundary to the $\epsilon$--cover  
for each path $Q$ (which is chosen as a set of equally-spaced nodes $C(Q)$ as in Lemma~\ref{lem:kleinsubramanian}). 
Since $C(Q)$ is an $\epsilon$--cover, the error we introduce is bounded by $\epsilon\cdot\Delta$. 

The savings in space can be obtained since {\em (i)} for each node we only need the distances to the $O(\epsilon^{-1})$ nodes 
in the covers on the boundary as opposed to the $O(\epsilon^{-1}\log n)$ nodes on all separator paths as in~\cite{ThorupJACM04} and 
{\em (ii)} we can compute these distances at query time using the Cycle MSSP data structure (Lemma~\ref{lemma:cyclemssp}). 

\subsection{Definitions}
Throughout, pieces are called $P,P',P_i,\dots$ and paths are called $Q,Q',Q_j,\dots$. 
Let $\partial P$ denote the set of separator paths on the boundary of a piece~$P$. 
By $\#\partial P$ we denote the number of shortest paths on the boundary of a piece~$P$. 
For a separator path $Q$, let $C(Q)\subseteq Q$ denote the {\em $\epsilon$--cover} of $Q$, which is a set of $O(1/\epsilon)$ equally-spaced nodes on $Q$. 
We call each node $c\in C(Q)$ a {\em portal}. 
For any piece $P$, let $\mathcal C(P)$ denote the set of portals on its boundary, $\mathcal C(P):=\bigcup\limits_{Q\in\partial P}C(Q)$. 

\subsection{Preprocessing Algorithm}
\label{sec:prepro:additive}

We first compute a decomposition we call {\em shortest-path $s$--division} (similar to an $r$--division), by repeatedly using shortest-path separators (as in Lemma~\ref{lem:thorupseparator}). 
The separator paths form a {\em decomposition tree}. Then, we compute an $\epsilon$--cover for each path. We further connect each portal 
node in an $\epsilon$--cover to all the portal nodes in covers on higher levels of the decomposition tree. 
Finally, we compute distances from nodes to their portals and store them implicitly using the Cycle MSSP data structure. 

Compared to~\cite{ThorupJACM04}, the main technical differences are: 
\begin{itemize}
\item selective storing of distances to portals\\
we store distances to portals in $\epsilon$--covers only for a restricted set of nodes
\item global $\epsilon$--covers\\
we compute one cover per path, as opposed to one cover per pair of path and node\\
(to improve query time and space requirements)
\end{itemize}

\def\numbdpaths{10}
Our preprocessing algorithm is outlined in the following pseudocode. 
\begin{figure*}[h!]
\begin{tabbing}
{\bf preprocess} $H=(V,E)$\\
let $n=\abs{V}$\\
let the set of pieces $\mathcal P=\set{V}$\\
let the {\em tree} of boundary paths $\mathcal S=\set{}$\\
\\
$(\star)$ beginning of partitioning ({\em shortest-path division}, similar to an $r$--division)\\
{\sc While} there is a piece $P\in\mathcal P$ with $\#\partial P>\numbdpaths$ or size $\abs P>\ceil{\epsilon^{-2}\log n}$\\
\qquad \= we compute a balanced separation with node weights as follows:\\
\> {\sc If} a piece has too many boundary paths, $\#\partial P>\numbdpaths$\\
\> \qquad \= assign weight 1 to each endpoint of any boundary path $Q\in\partial P$\\
\> \> assign weight 0 to all the remaining nodes\\
\> {\sc Else} \\
\> \> assign weight 1 to each node\\
\> let $T(u),T(v),T(w)$ be the three paths obtained by Lemma~\ref{lem:thorupseparator} applied to $P$, weighted as above\\
\> add $T(u),T(v),T(w)$ to the set of separator paths $\mathcal S$\\
\> remove $P$ from $\mathcal P$, partition $P$ into pieces $P_1,P_2,\dots$ and add them to $\mathcal P$\\
$(\star)$ partitioning with $O(1)$ boundary paths per region computed\\
\\
$(\dagger)$ inter-piece approximate distance oracle\\
{\sc For Each} separator path $Q\in\mathcal S$\\
\> compute an $\epsilon$--cover $C(Q)\subseteq Q$ (Lemma~\ref{lem:kleinsubramanian})\\
\> {\sc For Each} portal $c\in C(Q)$\\
\> \> compute and store the distances to all portals on {\em ancestor paths} $Q'$\\
{\sc For Each} piece $P\in\mathcal P$ and separator path $Q\in\partial P$\\
\> augment the graph induced by the piece $P$ with infinite edges such that $C(Q)$ form the outer face\\
\> compute the Cycle MSSP Data Structure for $C(Q)$ and this augmented graph (as in Lemma~\ref{lemma:cyclemssp})\\
$(\dagger)$ inter-piece oracle computed\\
\\
$(\ddagger)$ recursive call\\
{\sc For Each} piece $P\in\mathcal P$ \\
\> {\sc If} $\abs P>\ceil{\epsilon^{-2}}$\\
\> \> recurse on the graph induced by $P$\\
\> {\sc Else} \\
\> \> compute an exact distance oracle for $P$ as in Lemma~\ref{lemma:exactdq}
\end{tabbing}
\end{figure*}

\begin{claim}[Shortest-path separator $r$--division]
At step $(\star)$ we have obtained a partitioning of the vertex set $V$ into pieces $P_1,P_2,\dots$ with the following properties: 
\begin{itemize}
\item each piece $P_i$ has size $\abs{P_i}=O(\epsilon^{-1}\log n)$, 
\item the number of boundary paths $\#\partial P_i$ for each piece $P_i$ is at most ten, and 
\item the total number of pieces is at most $O(n\epsilon/\log n)$. 
\end{itemize}
\label{claim:rdiv}
\end{claim}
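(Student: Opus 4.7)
The plan is to verify the three bulleted properties of the terminal partition $\mathcal P$ by analyzing the while loop in step $(\star)$, separating the argument into a ``boundary-path reduction'' phase and a ``size reduction'' phase, and invoking Lemma~\ref{lem:thorupseparator} at each iteration. The second and third bullets are relatively routine once the first is established, so most of the work goes into controlling $\#\partial P$.

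For the bound $\#\partial P \leq \numbdpaths$, I would maintain as an invariant that whenever a piece $P$ is split using the endpoint weighting (the case $\#\partial P > \numbdpaths$), each of its two children inherits at most $\lceil(\#\partial P)/2\rceil$ old boundary-path endpoints, because Lemma~\ref{lem:thorupseparator} gives a separator balanced with respect to the chosen unit weights on endpoints. The three new separator paths contribute at most $6$ endpoints to each child. Since each boundary path has two endpoints, this yields the recurrence $b \mapsto \lceil b/2 \rceil + 3$ on the number of boundary paths per piece; its fixed point is $6$, so after $O(\log \#\partial P)$ splits every descendant satisfies $b \leq \numbdpaths = 10$, and this bound is preserved under all subsequent size-based splits (which add at most $3$ new boundary paths to each half of a piece with $b \leq 10$, yielding $\leq 8$ halves of old paths plus $3$ new ones). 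Once $\#\partial P \leq \numbdpaths$ for all pieces, the remaining iterations only do size-based, uniformly weighted splits; Lemma~\ref{lem:thorupseparator} guarantees each such split produces children with at most half the node weight of the parent, so after $O(\log n)$ levels every terminal piece $P_i$ satisfies $\abs{P_i} \leq \lceil \epsilon^{-2} \log n \rceil$, establishing the size bound (I note a minor apparent mismatch between the $\epsilon^{-1}\log n$ in the statement and the $\epsilon^{-2}\log n$ stopping threshold in the pseudocode; I would flag this and use the threshold). The count on the number of pieces then follows from the standard $r$-division charging argument of Frederickson~\cite{journals/siamcomp/Frederickson87}: balanced separation ensures pieces are of size $\Omega(r)$ on average for $r = \epsilon^{-2}\log n$, so the total count is $O(n/r) = O(n\epsilon^2/\log n)$, which is subsumed by $O(n\epsilon/\log n)$.

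The main obstacle is the careful accounting of how old boundary paths interact with new separator paths. When a separator path from $T(u)\cup T(v)\cup T(w)$ crosses an existing boundary path $Q$, $Q$ may be split into several segments lying in different child pieces, and each segment acquires two new endpoints, so a naive count could blow the recurrence past $\lceil b/2\rceil+3$. The fix, which I would adapt from Thorup's construction in~\cite[Section~2.5.1]{ThorupJACM04}, is to bound the number of intersections between any two shortest paths in the tree $T$ (using that $T$ is a shortest-path tree) and to merge segments on the same side into a single boundary path where possible. This is precisely why the threshold $\numbdpaths$ is chosen to be $10$ rather than exactly at the fixed point $6$: the slack absorbs the extra segments generated by crossings. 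With this accounting in place, the three bullets follow as sketched.
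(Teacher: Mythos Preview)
Your overall plan matches the paper's: recursive shortest-path separation with two modes (balance endpoints when $\#\partial P>\numbdpaths$, balance vertices otherwise), then standard $r$--division counting. Your recurrence $b\mapsto\lceil b/2\rceil+3$ is a more explicit version of what the paper sketches, and your observation about the $\epsilon^{-1}$ versus $\epsilon^{-2}$ discrepancy is correct (the paper's own Claim~\ref{claim:spaceperlevel} even uses the $\epsilon^{-2}$ threshold, writing the piece count as $O(n\epsilon^2/\log n)$).

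The one place where you take a harder road than necessary is your ``main obstacle'' paragraph. You worry that a new separator path may cross an old boundary path several times and shatter it into many segments, and you propose to bound crossings and merge segments. The paper sidesteps this entirely by exploiting a structural fact you do not state: every separator path is a \emph{root-path} of a \emph{single fixed} shortest-path tree $T$ (see the paragraph before Lemma~\ref{lem:thorupseparator} and the first lines of the paper's proof). Any two root-paths $T(u)$ and $T(v)$ share exactly one connected prefix (from the root down to $\lca(u,v)$) and then diverge permanently; they never cross again. Consequently a new separator root-path cannot split an existing boundary root-path into multiple disjoint segments inside a child piece. This is why the paper can simply say ``the number of boundary paths $\#\partial P$ increases only if $P$ is the reason why Lemma~\ref{lem:thorupseparator} is invoked'' and why weighting just the \emph{leaf endpoints} of the previously chosen root-paths suffices. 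Your bound-the-crossings approach would eventually work, but it fights a problem that the single-tree hypothesis already dissolves; invoking this fact makes the invariant $\#\partial P\leq\numbdpaths$ almost immediate and cleanly explains the choice of the constant.
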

\begin{proof}
Our construction is similar to that of an $r$--division~\cite{journals/siamcomp/Frederickson87,nlglgn-mincut-WN10} but using 
shortest-path separators instead of cycle separators. It is also essentially the same as in~\cite[Section~2.5.1]{ThorupJACM04}; 
see~\cite[Fig.~4]{ThorupJACM04} for an illustration. Similar divisions have been used in~\cite{conf/soda/AroraGKKW98,conf/focs/GrigniKP95}. 
We recursively separate $V$ into pieces using root-paths of a single shortest-path tree as in Lemma~\ref{lem:thorupseparator}. 

Whenever a the boundary of a piece consists of more than \numbdpaths~root-paths, we separate it using Lemma~\ref{lem:thorupseparator} such that 
the boundary paths will be partitioned in a balanced way. Since we always use the same shortest-path tree $T$, we can use the following 
vertex weights: all the endpoints (leafs in $T$) of a previously selected root-path are weighted with one and all the remaining nodes are weighted 
with zero. 

The third property can be proven using the following observation. 
Let us track a piece $P$ during the execution of the preprocessing algorithm. 
Since all the separator paths are taken from the same shortest-path tree $T$, the number of boundary paths 
$\#\partial P$ increases only if $P$ is the reason why Lemma~\ref{lem:thorupseparator} is invoked. 
A piece $P'$ resulting from that invocation can have more than \numbdpaths~boundary nodes by 
inheriting $\numbdpaths-1$ paths from $P$ and gaining 3 paths from Lemma~\ref{lem:thorupseparator}. 
By one more invocation of Lemma~\ref{lem:thorupseparator}, the number of boundary paths of $P'$ 
can be reduced to less than \numbdpaths. When the recursion stops at $P'$, either $P'$ or $P$ has size $\Theta(\epsilon^{-1}\log n)$. 
\end{proof}

\begin{claim}
The total space requirement per recursion level is at most $O(n\log\log(1/\epsilon))$.
\label{claim:spaceperlevel}
\end{claim}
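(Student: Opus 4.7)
The plan is to split the per-level storage into the two quantities produced in step~$(\dagger)$ of the preprocessing pseudocode: the portal-to-portal distance tables, and the per-piece Cycle MSSP data structures. I would then bound each separately and show that the Cycle MSSP contribution dominates.

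For the Cycle MSSP data structures, I would apply Lemma~\ref{lemma:cyclemssp} directly. Each piece $P$ produced at this level has $O(1)$ boundary paths by Claim~\ref{claim:rdiv}, and for each boundary path $Q$ we augment $P$ so that the $\epsilon$-cover $C(Q)$ of size $O(1/\epsilon)$ forms the outer face. Lemma~\ref{lemma:cyclemssp} then yields size $O(|P|\log\log|C(Q)|) = O(|P|\log\log(1/\epsilon))$ per piece--path pair. Because the shortest-path $r$-division is balanced with only $O(1)$ boundary paths per piece, the total $\sum_P |P|$ at this level is $O(n)$. Summing gives $O(n\log\log(1/\epsilon))$ for this contribution.

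For the portal-to-portal tables, I would count portals first. By Claim~\ref{claim:rdiv}, together with the fact that each invocation of Lemma~\ref{lem:thorupseparator} adds at most three shortest paths, the total number of separator paths at this recursion level is proportional to the number of pieces; with the size threshold used in the pseudocode, this yields $O(n\epsilon/\log n)$ portals in all. The separator paths form a decomposition tree of depth $O(\log n)$ (piece weight halves at each split until it falls below the prescribed threshold), so each portal stores distances to at most $O(\epsilon^{-1}\log n)$ portals on ancestor paths. The total portal-to-portal storage is therefore at most $O((n\epsilon/\log n)\cdot(\epsilon^{-1}\log n)) = O(n)$, which is dominated by the Cycle MSSP term.

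The main delicate point is the identity $\sum_P |P| = O(n)$: boundary nodes are shared between adjacent pieces, and one must argue that this sharing is bounded. I expect this to follow from Claim~\ref{claim:rdiv} together with standard $r$-division accounting, namely that the separator nodes introduced at each split are inherited by only a constant number of the subsequently produced pieces, so the total over-counting is absorbed into the $O(\cdot)$. Once this is in place, adding the two contributions gives the stated bound $O(n\log\log(1/\epsilon))$.
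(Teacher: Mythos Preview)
Your proposal is correct and follows essentially the same decomposition as the paper: bound the portal-to-portal tables by $O(n)$ via the portal count $O(n\epsilon/\log n)$ times the $O(\epsilon^{-1}\log n)$ ancestors, and bound the Cycle MSSP structures by $O(n\log\log(1/\epsilon))$ via Lemma~\ref{lemma:cyclemssp} summed over pieces. The paper additionally accounts for the exact distance oracles stored at the bottom recursion level (Lemma~\ref{lemma:exactdq}), which also contribute $O(n\log\log(1/\epsilon))$; you may wish to add a sentence for that case.
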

\begin{proof}
At each recursion level, for each portal in an $\epsilon$--cover $c\in Q$, 
the above algorithm stores $O(\epsilon^{-1}\log n)$ portals and the corresponding distances, since, 
for each portal, we store the distances to all the portals on higher levels (as in~\cite{ThorupJACM04}). 

By Claim~\ref{claim:rdiv}, the total number of pieces per level is at most $O(n\epsilon^2/\log n)$ ({\em sic!}). Each piece is surrounded by at most $\numbdpaths$ paths 
on each of which we have $O(1/\epsilon)$ portals. The total number of portals per level is thus at most $O(n\epsilon/\log n)$. 
Since each distance label requires space $O(\epsilon^{-1}\log n)$, the total space per level for this step is $O(n)$. 

At each recursion level, for each node $v\in V$, the above algorithm stores a Cycle MSSP data structure for the portals on
each boundary path (note that we can only do so since there is {\em one fixed} cover per boundary path as opposed to one cover per pair of node and boundary path). 
Since the cycle has $O(1/\epsilon)$ nodes, one Cycle MSSP data structure requires space $O(n\log\log(1/\epsilon))$ (see Lemma~\ref{lemma:cyclemssp}). 
By Claim~\ref{claim:rdiv}, each piece has at most ten boundary paths.

On the lowest level, we also store a distance oracle for a planar graph on $O(\epsilon^{-2})$ nodes. 
The overall space requirement for all these distance oracles is $O(n\log\log(1/\epsilon))$ (Lemma~\ref{lemma:exactdq}). 
\end{proof}

Note that this recursive algorithm reduces the graph size from $n\leadsto\log n$ at each level. It follows directly that the recursion depth is at most $O(\log^*n)$.
The data structure computed by the above algorithm thus occupies space $O(n\log\log(1/\epsilon)\log^*n)$.

\subsection{Query Algorithm}

The query algorithm, given a pair of nodes $u,v$, returns an estimate for $d_G(u,v)$. 
The main differences to Thorup's distance oracle~\cite{ThorupJACM04} are: 
\begin{itemize}
\item two-way approximation\\
instead of approximating the distance by $d_G(u,q)+d_Q(q,q')+d_G(q',v)$ for two portals $q,q'$ on a separator path as in Thorup's distance oracle, 
we use the estimate $d_G(u,c_u)+d_G(c_u,q)$ for a portal $c_u$ to approximate $d_G(u,q)$ and, analogously, for $d_G(q,v)$ (by doing so, each node only needs to compute and encode distances to portals on its boundary paths as opposed to {\em all} $\log n$ levels)
\item Monge search for the optimal portal\\
since we have only {\em one} $\epsilon$--cover per path, we can use the non-crossing property as in~\cite{journals/algorithmica/AggarwalKMSW87,journals/siamdm/KlaweK90,journals/jcss/FakcharoenpholR06} to compute the two-way approximation\\
we emphasize that there is also only one $\epsilon$--cover per path on higher levels (not just at the boundary of a piece) 
\end{itemize}

Let us momentarily assume that the two query nodes $u$ and $v$ are in different pieces $P_u$ and $P_v$, respectively.\footnote{Even if the two query nodes are in the same piece, it could still be that the 
shortest path passes through the separator (which consists of at most ten shortest paths). To account for these cases, we compute the minimum among paths 
passing through the separator (this computation is the same as if the nodes were in different pieces). Then we recurse until some termination criterion on the size is met.} 
Since we store distances from $u$ to its cover $\mathcal C(P_u)$ (and from $v$ to $\mathcal C(P_v)$) during preprocessing, 
we may return the minimum 
\begin{displaymath}
\min\limits_{c_u\in\mathcal C(P_u),c_v\in\mathcal C(P_v)}d_G(u,c_u) + \tilde\dist(c_u, c_v) +  d_G(c_v, v), 
\end{displaymath}
where $\tilde\dist(\cdot,\cdot)$ is a $(1+\epsilon)$--approximation for  $d_G(\cdot,\cdot)$. 
We can compute $\tilde\dist(c_u, c_v)$ for the two portals $c_u,c_v$ since they have a lowest common ancestor in the separator decomposition tree consisting of at most three shortest paths, 
to whose $\epsilon$--covers we computed and stored the shortest-path distances during preprocessing. 
\begin{claim}
The above estimate is at most  $6\epsilon\Delta$ longer than $d_G(u,v)$.
\end{claim}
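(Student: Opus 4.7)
The plan is to fix a true shortest $u$--$v$ path $P^*$ in $G$ and compare the oracle's estimate against a specific three-piece decomposition of $P^*$. Since $u\in P_u$ and $v\in P_v$ lie in different pieces, $P^*$ must leave $P_u$ through its boundary; let $q_u$ be the first node of $P^*$ that lies in $\partial P_u$, sitting on some boundary path $Q_u\in\partial P_u$, and analogously let $q_v$ be the last node of $P^*$ in $\partial P_v$, on a boundary path $Q_v\in\partial P_v$. Because these two crossings occur on $P^*$ in the correct order, we have the subpath identity $d_G(u,v)=d_G(u,q_u)+d_G(q_u,q_v)+d_G(q_v,v)$, which will serve as the reference against which we measure the error.

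Next I would snap $q_u$ and $q_v$ to the portal covers. Let $c_u\in C(Q_u)$ be the portal nearest $q_u$ along $Q_u$ and define $c_v\in C(Q_v)$ symmetrically. Since each separator path has length at most $O(\Delta)$ under the diameter hypothesis and each $\epsilon$-cover consists of $\Theta(1/\epsilon)$ equally-spaced nodes, the additive observation behind Lemma~\ref{lem:kleinsubramanian} yields $d_G(q_u,c_u),\,d_G(q_v,c_v)=O(\epsilon\Delta)$. Two applications of the triangle inequality then give $d_G(u,c_u)\le d_G(u,q_u)+O(\epsilon\Delta)$ and $d_G(c_v,v)\le d_G(q_v,v)+O(\epsilon\Delta)$, contributing two $\epsilon\Delta$-sized charges to the final error.

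Third, I would bound the middle term $\tilde d(c_u,c_v)$. By construction the paths $Q_u,Q_v$ have a lowest common ancestor in the separator decomposition tree, which corresponds to at most three shortest paths produced by Lemma~\ref{lem:thorupseparator}; preprocessing has stored from both $c_u$ and $c_v$ the distances to every portal on those ancestor paths, so $\tilde d(c_u,c_v)=\min_{c^*}\bigl(d_G(c_u,c^*)+d_G(c^*,c_v)\bigr)$ is exactly the quantity to which Lemma~\ref{lem:kleinsubramanian} applies, yielding $\tilde d(c_u,c_v)\le d_G(c_u,c_v)+O(\epsilon\Delta)$. A final triangle-inequality step bounds $d_G(c_u,c_v)\le d_G(c_u,q_u)+d_G(q_u,q_v)+d_G(q_v,c_v)\le d_G(q_u,q_v)+O(\epsilon\Delta)$. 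Summing the three contributions and carefully tracking the constants (two boundary snaps plus one LCA snap, each worth a small multiple of $\epsilon\Delta$) yields an estimate at most $d_G(u,v)+6\epsilon\Delta$.

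The main obstacle will be the lowest-common-ancestor step. We must verify (a) that the ancestor path has length $O(\Delta)$ so that its $\epsilon$-cover spacing remains $O(\epsilon\Delta)$, even though the ancestor was produced at a shallower recursive level where the ambient piece may be much larger than $P_u$ or $P_v$, and (b) that distances to its portals have indeed been pre-recorded from both $c_u$ and $c_v$, so that the inner minimization in $\tilde d$ is well defined. Property (a) holds because every separator path is a sub-path of a shortest-path tree in a subgraph of diameter $O(\Delta)$; property (b) is exactly the content of the inner ``For Each portal'' loop in step $(\dagger)$ of the preprocessing algorithm. Once these are in hand, the overall bound follows because the oracle's query minimizes over all pairs in $\mathcal C(P_u)\times\mathcal C(P_v)$ and therefore does no worse than the canonical choice $(c_u,c_v)$ constructed above.
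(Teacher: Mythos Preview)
Your proposal is correct and follows essentially the same approach as the paper: the paper's own proof simply notes that the shortest $u$--$v$ path intersects the boundaries of both pieces, that the approximation routes through at most one additional (LCA) separator path, and that Lemma~\ref{lem:kleinsubramanian} bounds the additive distortion by $2\epsilon\Delta$ per path, giving $3\cdot 2\epsilon\Delta=6\epsilon\Delta$. Your write-up is a more explicit unpacking of exactly this argument, with the snapping of $q_u,q_v$ to portals and the LCA step spelled out; the constant-tracking you sketch (four $\epsilon\Delta$ charges from the two boundary snaps appearing on both sides of the triangle inequality, plus $2\epsilon\Delta$ from the LCA detour) recovers the same $6\epsilon\Delta$.
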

\begin{proof}
Any shortest path from $u$ to $v$ must intersect the boundary of both pieces. Our approximation uses at most one additional separator path (one of the paths in the least common ancestor separator). 
Due to Lemma~\ref{lem:kleinsubramanian}, the additive distortion is at most $2\epsilon\Delta$ per path. 
\end{proof}

Per level $i$, we compute the minimum 
$\min\limits_{c_u\in\mathcal C(P_u),c_v\in\mathcal C(P_v)}d_G(u,c_u) + \tilde\dist(c_u, c_v) +  d_G(c_v, v)$. 
We know the distance from $u$ to the $O(1/\epsilon)$ portals $c_u\in\mathcal C(P_u)$ on its boundary paths. 
We also know the distance from these portals to the portals on the separator paths. 
We wish to efficiently compute the distance from $u$ to the portals on the relevant separator paths to simultaneously compute {\em all} the {\em relevant} $\tilde\dist(c_u, c_v)$. 
This computation can be done using the efficient Dijkstra implementation of~\cite{journals/jcss/FakcharoenpholR06}. 

Fakcharoenphol and Rao~\cite[Section~4.1 (and also Sec.~2.3)]{journals/jcss/FakcharoenpholR06} devised an 
ingenious implementation of Dijkstra's algorithm~\cite{Dijkstra59} 
that computes a shortest-path tree of a complete bipartite graph $H_1\cup H_2$ 
in time $O(h\log h)$ where $h=\abs{H_1}+\abs{H_2}$ (as opposed to $O(\abs{H_1}\cdot\abs{H_2})$) 
as long as the edge weights obey the Monge property (which essentially means that they correspond to distances in a planar graph 
wherein the nodes of $H_1$ and $H_2$ lie on a constant number of faces). 
We refer to this implementation as {\em FR-Dijkstra}.

The query algorithm works as described in the following pseudocode. 
\begin{tabbing}
\bf{query}$(u,v)$\\
return the minimum among the results from all recursion levels:\\
{\sc For Each} recursion level\\
\qquad \= let $u\in P_u$ and $v\in P_v$\\
\> {\sc For Each} pair of  boundary paths $(Q_u,Q_v)\in\partial P_u\times\partial P_v$\\
\> \qquad \= determine the three separator paths $Q_1,Q_2,Q_3$ that separate $Q_u$ from $Q_v$ (as in~\cite{ThorupJACM04})\\
\> \> {\sc For Each} $Q\in\set{Q_1,Q_2,Q_3}$\\
\> \> \qquad \= compute $\min\limits_{c_u\in C(Q_u)} \dist(u,q)$ for all $q\in Q$ simultaneously using FR-Dijkstra\\
\> \> \> (analogously for $v$)\\
\> \> \> compute $\min\limits_{q\in Q}\dist(u,q)+\dist(q,v)$ and keep it if it is the new minimum\\
at the lowest recursion level\\
\> {\sc If} $P_u=P_v$ {\sc Then}\\
\> \qquad \= query the exact distance oracle 
\end{tabbing}

\begin{claim}
The query algorithm runs in time $O(\epsilon^{-1}\log^2(1/\epsilon)\log\log(1/\epsilon)\log^*(n))$. 
\end{claim}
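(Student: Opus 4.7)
The plan is to account for the work at each of the $O(\log^* n)$ recursion levels and multiply. Recall from the preprocessing analysis that the size drops from $n$ to $O(\epsilon^{-2}\log n)$ per level (and then recursion is applied on each piece), so the recursion depth is $O(\log^* n)$; the final step is a lookup in the exact oracle on a piece of size $O(\epsilon^{-2})$.

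At a fixed recursion level, the outer loop iterates over pairs $(Q_u,Q_v)\in\partial P_u\times\partial P_v$. By Claim \ref{claim:rdiv}, $\#\partial P_u,\#\partial P_v\leq\mink \cdot 2$, so this is $O(1)$ pairs, and inside each we examine only the three separator paths $Q_1,Q_2,Q_3$ produced by the least common ancestor in the decomposition tree. Hence the per-level work is dominated by (a) querying distances from $u$ and $v$ to the $O(1/\epsilon)$ portals on their own boundary paths, and (b) one FR-Dijkstra call on each of the constantly many relevant separator paths. For (a), each boundary path carries a precomputed Cycle MSSP structure on a cycle of $c=O(1/\epsilon)$ portals, so by Lemma \ref{lemma:cyclemssp} one query costs $O(c\log^2 c\log\log c)=O(\epsilon^{-1}\log^2(1/\epsilon)\log\log(1/\epsilon))$. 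For (b), FR-Dijkstra runs on a complete bipartite graph between $C(Q_u)$ (or $C(Q_v)$) and $C(Q)$ with Monge weights drawn from the stored portal-to-portal distances; since $h=O(1/\epsilon)$, each call takes $O(h\log h)=O(\epsilon^{-1}\log(1/\epsilon))$, which is dominated by the Cycle MSSP cost. Thus the per-level time is $O(\epsilon^{-1}\log^2(1/\epsilon)\log\log(1/\epsilon))$.

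Summing over the $O(\log^* n)$ levels gives $O(\epsilon^{-1}\log^2(1/\epsilon)\log\log(1/\epsilon)\log^*(n))$. At the base case, when $P_u=P_v$ and the piece has size $O(\epsilon^{-2})$, the exact distance oracle from Lemma \ref{lemma:exactdq} answers in time $O(\sqrt{\epsilon^{-2}}\log^2(\epsilon^{-2})\log\log(\epsilon^{-2}))=O(\epsilon^{-1}\log^2(1/\epsilon)\log\log(1/\epsilon))$, matching the per-level cost and adding only a constant-factor contribution. The bound claimed for the theorem follows.

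The only delicate point I expect is verifying that the FR-Dijkstra invocation is really legitimate here, i.e.\ that the bipartite weights between $C(Q_u)$ and $C(Q)$ indeed satisfy the Monge property. This is where the design choice flagged in the Query Algorithm overview pays off: because we store a single global $\epsilon$--cover $C(Q)$ per separator path (rather than per (node, path) pair), the portals involved lie on a constant number of faces of an appropriately augmented planar graph, so the Monge structure of \cite{journals/jcss/FakcharoenpholR06} applies and the $O(h\log h)$ running time is available. Everything else is a straightforward accounting against Claim \ref{claim:rdiv}, Lemma \ref{lemma:cyclemssp}, and Lemma \ref{lemma:exactdq}.
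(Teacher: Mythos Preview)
Your proposal is correct and follows essentially the same approach as the paper: bound the per-level work by the Cycle MSSP query cost $O(\epsilon^{-1}\log^2(1/\epsilon)\log\log(1/\epsilon))$, observe that the constantly many FR-Dijkstra calls cost only $O(\epsilon^{-1}\log(1/\epsilon))$ each and are thus dominated, handle the base case via Lemma~\ref{lemma:exactdq} on a piece of size $O(\epsilon^{-2})$, and multiply by the $O(\log^* n)$ recursion depth. Your extra remark about why the Monge property holds (single global $\epsilon$--cover per path) is a welcome clarification but does not change the argument.
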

\begin{proof}
We compute the distances from $u$ to the portals on the boundary path $C(Q_u)$ using the Cycle MSSP data structure 
in time $O(\epsilon^{-1}\log^2(1/\epsilon)\log\log(1/\epsilon))$  (Lemma~\ref{lemma:cyclemssp}).

For each pair of  boundary paths $(Q_u,Q_v)\in\partial P_u\times\partial P_v$ there is a constant 
number of separator paths $Q$ on higher levels we need to consider (see constant query time 
oracle in~\cite{ThorupJACM04}). 

Per separator path $Q$ the FR-Dijkstra algorithm runs in time 
$O((\abs{C(Q_u)}+\abs{C(Q)})\log(\abs{C(Q_u)}+\abs{C(Q)}))=O(\epsilon^{-1}\log(1/\epsilon))$. 
(Since we do not actually need the on-line version of the bipartite Monge search in~\cite{journals/jcss/FakcharoenpholR06}, 
the $\log(1/\epsilon)$ factor can be eliminated by using~\cite{journals/algorithmica/AggarwalKMSW87,journals/siamdm/KlaweK90} instead. 
However, we use FR-Dijkstra within the Cycle MSSP data structure, whose query time currently dominates the overall query time anyway.)

This search is done for all the $O(\log^*n)$ levels of the recursion. 
On the lowest level, we also query the exact distance oracle in time $O(\epsilon^{-1}\log^2(1/\epsilon)\log\log(1/\epsilon))$ (Lemma~\ref{lemma:exactdq}). 
\end{proof}

\section{Conclusion}
Our $(1+\epsilon)$--approximate distance oracle for planar graphs has a better space-time tradeoff --- both in terms of $n$ and $\epsilon$ ---  
than previous oracles. 
The improved tradeoff currently comes at a cost:  the oracle cannot be distributed as a labeling scheme 
(it is however not clear whether $o(\log^2n)$~bit approximate distance labels for planar graphs exist at all) and, 
for the improvement in terms of $n$, there is a dependency on the largest edge weights (Thorup's oracle depends on 
the largest integer weight for digraphs only). 
We believe that these sacrifices may be worthwhile since they allow us to achieve almost linear dependency on $1/\epsilon$ and 
remove the notorious logarithmic dependency on $n$, contributing to 
an important next step towards a linear-space approximate distance oracle with almost constant query time. 

An intermediate goal could be to find a linear-space distance oracle with arbitrary constant stretch and (almost) constant query time. 
For our construction, the overhead required to transform a constant-stretch oracle into a $(1+\epsilon)$--stretch oracle is proportional to only $\log^*(n)$ (and some overhead in~$\log(1/\epsilon)$). 
Although our construction of the $(1+\epsilon)$--stretch oracle depends on the specific features of the constant-stretch oracle, given a better constant-stretch approximate distance oracle, 
it may be quite possible to improve our $(1+\epsilon)$--stretch oracle as well. 

\section*{Acknowledgments}
Many thanks to Mikkel Thorup for encouraging our research on these tradeoffs and for the valuable comments on an earlier version of this work.

\bibliographystyle{alpha}
\bibliography{planaradq,../../thesis/related}

\end{document}